\documentclass[runningheads]{llncs}
\usepackage[T1]{fontenc}
\usepackage{graphicx}
\usepackage{macro}
\usepackage{amssymb}
\usepackage{amsmath}
\usepackage[table]{xcolor}
\usepackage{hyperref}
\usepackage{cleveref}
\usepackage[ruled,linesnumbered]{algorithm2e}
\usepackage[final]{microtype}

\crefname{algocf}{Algorithm}{Algorithms}
\Crefname{algocf}{Algorithm}{Algorithms}

\usepackage{color}

\AtBeginEnvironment{proof}{\setlength{\parskip}{0pt}}

\begin{document}
\setlength{\textfloatsep}{12pt plus 2pt minus 2pt}
\setlength{\floatsep}{12pt plus 2pt minus 2pt}
\setlength{\intextsep}{12pt plus 2pt minus 2pt}

\allowdisplaybreaks

\title{On the Power of Deception in Repeated Games}
%
%\titlerunning{Abbreviated paper title}
% If the paper title is too long for the running head, you can set
% an abbreviated paper title here
%
\author{
Saba Ahmadi \and 
Avrim Blum \and 
Dimitar Chakarov \and
Melissa Dutz 
 %\inst{1}
}

\authorrunning{Ahmadi et al.}
% First names are abbreviated in the running head.
% If there are more than two authors, 'et al.' is used.
%
\institute{Toyota Technological Institute at Chicago \\
\email{\{saba, avrim, chakarov, melissa\}@ttic.edu}
}
\maketitle              % typeset the header of the contribution
\begin{abstract}
  In repeated games, opponents often predict what we'll do next by looking at what we have done so far. This allows us to deceive them: we can deliberately behave one way for a period of time to shape their expectations, then switch strategies to profit from the induced response. We study deception in repeated two-player normal-form games against \textit{count-based} learners, whose behavior depends only on how often we have played each action in the past. We formalize deceptive and non-deceptive play, and introduce the notion of a \emph{deception bonus}, the payoff gain of the best deceptive strategy over the best fixed mixed strategy. We establish structural results on deception in general-sum games. We design exact dynamic programs for optimizing against any count-based learner when the action space or opponent's memory is small, and develop approximation algorithms for settings where the opponent's memory or the time horizon is large. We also provide an approximation algorithm for learning to deceive an opponent whose count-based learning rule is unknown. To complement our algorithmic results, we show that approximating the optimal deceptive payoff against the classic Empirical Risk Minimization (ERM) learning rule is NP-hard, including obtaining any constant-factor approximation or even a $T^\alpha$-additive approximation for any $0 < \alpha < 1$. Finally, we empirically measure the deception bonus in random games with i.i.d. payoffs.

\keywords{Strategic Deception \and Repeated Games \and Learning in Games \and Stackelberg Equilibrium.}
\end{abstract}

% Paper body
\section{Introduction}
\label{section:introduction}

In this paper, we propose and explore a game-theoretic model of {\em deception}, and consider the power of {deceptive} play in repeated, normal-form, two-player games.
What constitutes deceptive play? To deceive someone is to cause them to believe something which is untrue, so to study \textit{deceiving} an opponent requires some model of what the opponent ``believes.'' Here, we model the opponent as believing our past play is predictive of our future play: we study deceiving ``count-based'' learners, whose behavior is a function of the empirical counts (i.e.\ empirical
distribution) of our past actions. We have a special focus on opponents who use the classic Empirical Risk Minimization algorithm (and a limited-memory variant). Empirical Risk Minimization ($\FTL$) is widely used in settings where one believes the world is stochastic. Using $\FTL$, the opponent plays the action which would have achieved the highest cumulative payoff against our past actions. This models an opponent who believes our next action will be drawn from the empirical distribution of our historical actions ($\FTL$ is an optimal strategy in that case). $\FTL$ is also known as ``Follow-the-Leader'' in strategic games; we refer to it as $\FTL$ here to reflect the opponent's incorrect belief in the stochasticity of the interaction. A deceptive player could take advantage of $\FTL$ by playing a strategic sequence of actions to effectively lure their opponent into playing a favorable action, in order to then deviate from the opponent's expectations and reap a big payoff. In practice, for example, this could capture the behavior of a “pool shark”: someone who pretends to be unskilled at pool by playing poorly in order to lure an opponent into playing for money, then surprises their opponent by playing well and winning the money. 

In this setting, we ask, how can the deceptive player---who we refer to as ``the optimizer''---achieve the highest possible payoff? How much does the optimizer stand to gain by deception over what can be gained by ``non-deceptive'' play? This raises the question of what, in general, constitutes ``non-deceptive'' play.

Announcing which action one will play next would certainly be transparent. But this clearly leaves one open to exploitation; in many zero-sum games, for example, the only minimax optimal strategies are mixed. What about announcing a mixed strategy? Randomization does serve to ``hide'' what one will do; is this deceptive? Previous works often model deception as randomization; for example, randomization is used to model bluffing in the Kuhn poker model~\cite{Kuhn+1951+97+104}. In this work, however, we will consider using randomization to be simply a form of unpredictability, and not deceptive in and of itself. We define our non-deceptive benchmark to be the maximum payoff one could achieve by announcing and playing according to a fixed mixed strategy.  In particular, playing a minimax-optimal strategy is {\em not} deceptive in our model.

\subsection{Our Contributions} 
In the real world, deception often unfolds over time, initially building up trust or shaping expectations to lure someone into a vulnerable position, and then taking advantage. We introduce a new model of deception which aims to capture this kind of multi-stage deception. By contrast, existing works tend to model deception as a one-shot use of randomization (as mentioned above, bluffing in Kuhn poker is an example of this).

We also introduce a new notion of non-deceptive play and a ``deception bonus,'' which measures the additional payoff deceptive play could garner over non-deceptive play.

We begin by establishing structural results for deception in general-sum games. We then design dynamic programming algorithms for exactly optimizing against any count-based learner when the game or the opponent's memory is small, as well as approximation algorithms for memory-limited count-based learners when the opponent's memory or the number of rounds is large. We also provide an algorithm for learning to approximately optimize against an opponent whose specific count-based learning rule is unknown. Complementing our algorithmic results, we show that approximating the optimal payoff against $\FTL$ even in zero-sum games is NP-hard, improving on results of \cite{assosmaximizing}. Finally, we measure the deception bonus in random zero- and general-sum games.

\subsection{Related work}
\paragraph{Optimizing against Bounded Players.} 
\cite{fortnow1994optimality} studied playing against computationally bounded players whose strategies must be realizable by finite automata or polynomial-time Turing machines. They showed that for the games Prisoner's Dilemma and Matching Pennies, there exist dominant strategies for playing against a polynomial-time opponent. \cite{freund1995efficient} extended this line of work by giving a polynomial-time algorithm to learn approximately optimal play against any finite automaton with probabilistic state transitions. Building on that, they further established that if transitions of the (probabilistic state) automaton can be randomized, finding an approximately optimal strategy is PSPACE-complete.

\paragraph{Optimizing against Online Learners.}
Recent work has established computational hardness for optimizing against opponents who use specific online learning algorithms. \cite{assosmaximizing} show that optimizing against $\FTL$ is computationally intractable in general-sum games. In contrast, our hardness result from \Cref{section:hardness-results} shows that the problem is hard even for zero-sum games, and further establishes hardness of approximation up to additive $T^{\alpha}$ for $0 < \alpha < 1$. A complementary line of work~\cite{braverman2018selling,deng2019strategizing,mansour2022strategizing,brown2023learning} studies how to optimize against no-regret online learners. In particular, \cite{deng2019strategizing} show that in general-sum games, the optimizer can gain higher cumulative utility when strategizing against a no-regret learner compared to the Stackelberg equilibrium utility. In other words, \cite{deng2019strategizing} show that a deception bonus exists in games where the opponent has more than two actions and is playing a certain kind of no-regret strategy which limits the probability mass on actions that would have performed sufficiently sub-optimally in hindsight (also known as a ``mean-based'' strategy). In contrast, \cite{mansour2022strategizing} characterize the class of learning algorithms that limit the optimizer's payoff to at most the Stackelberg value. That is, they study the class of learning rules that permit no deception bonus.

\paragraph{Other Notions of Deception.} 
Previous works study deception in Stackelberg games, where the follower misreports their type~\cite{gan2019imitative} or commits to a fake payoff matrix~\cite{nguyen2019imitative,nguyen2020tackling} in order to elicit a specific response from the leader. Later works explore how the follower can optimally deceive a learning leader~\cite{birmpas2020optimally} or learn a good deceptive payoff matrix~\cite{chen2023learning}. In hidden-role games~\cite{aitchison2021learning,carminati2024hidden}, where players are assigned private team roles and have to recognize and cooperate with their own team members, deception arises in choosing what information to reveal to the other players. In contrast to these notions where deception involves commitment to false information or concealment of private information, our work models deception over time by capturing how a player can strategically choose a sequence of actions to manipulate and exploit the opponent's learning algorithm. In a different direction, \cite{balcan2018diversified} investigated diversified strategies that protect against deceptive play by bounding the maximal probability of a single action.
\section{Model}\label{section:model}

We consider a repeated normal-form game with two players---$A$ (the optimizer) and $B$ (the opponent)---played for $T$ rounds. The optimizer $A$ has $k_A$ actions $\calA = \{a_1, \dots, a_{k_A}\}$ and the opponent $B$ has $k_B$ actions $\calB = \{b_1, \dots, b_{k_B}\}$. Let $u^A, u^B : \calA \times \calB \to \mathbb{R}$ be the utility functions of $A$ and $B$, respectively. Notice that utility is a function of a single pair of actions, i.e. each player receives utility once per round. 

\begin{definition}[Expected payoff for player $A$]
\label{def:expected-payoff}
    For $a \in \calA$ and $\beta \in \Delta(\calB)$ (where $\Delta(\calB)$ denotes the set of mixed actions available to player $B$), the \emph{expected payoff} for player~$A$ is
    \[
        \bar{u}^A(a,\beta) \triangleq \E_{b \sim \beta}\!\left[u^A(a,b)\right] = \sum_{j=1}^{k_B} \beta(b_j) u^A(a,b_j).
    \]
\end{definition}

\subsection{Opponent Strategies}

Throughout the paper, we focus on deceiving opponents who use $\emph{count-based}$ learning rules.

\begin{definition}[Count-based learning rule]
\label{definition:count-based}
A count-based learning rule for the opponent, player $B$, is a function $f: \mathbb{N}^{k_A} \to \Delta (\mathcal{B})$ that maps counts $c_1 \dots c_{k_A}$, where $c_i$ is the number of times $A$ has played action $a_i$, to a mixed strategy over the action space $\mathcal{B}$. When $f$ outputs pure strategies, we write $f : \mathbb{N}^{k_A} \to \mathcal{B}$. At the beginning of round $t$, if the opponent has limited memory length $m$, $c_i$ is the number of times $A$ played $a_i$ in the last $m$ rounds, and $\sum_{i \in [k_A]}c_i = \min(m, t-1)$. Otherwise, $\sum_{i \in [k_A]}c_i = t-1$.
\end{definition}

While a count-based learning rule does not depend on the time-order of actions in the sense that the mixed action it plays is a function only of the action counts, by convention, we also write $f([a_1, \dots, a_t])$ to denote the mixed strategy obtained by applying $f$ to the counts induced by the action sequence $[a_1, \dots, a_t]$. For limited-memory count-based learning rules, access to the sequence is used exclusively to maintain accurate action counts over a sliding window of the past $m$ rounds: at each time step, the opponent must know which action was played $m$ steps ago in order to know which count to decrement next. The strategy itself remains a function only of the action counts.

\begin{remark}
 For a given count-based learning rule, we assume in our algorithmic results that we can compute its output given historical counts in $O(k_A k_B)$ time. This allows constant-time computations involving all $k_A$ counts for each of the opponent's $k_B$ actions. If a given count-based algorithm can be computed faster or requires more time, this factor would change accordingly in our runtimes.  
\end{remark}

For our structural and hardness results, we focus on the Empirical Risk Minimization (\FTL) learning rule and its memory-limited variant, which are both contained in the class of count-based learning rules.

\begin{definition}[Empirical Risk Minimization, $\FTL$]\label{definition:ftl}
    At time step $t$, the learning rule \emph{Empirical Risk Minimization} $\FTL : \calA^{t - 1} \to \calB$ outputs the action in $\calB$ that achieves the best cumulative utility in hindsight, so
    \[
        \FTL([a_1, \dots, a_{t - 1}]) = \argmax_{b \in \mathcal{B}} \sum_{i = 1}^{t - 1} u^B(a_{i}, b),
    \]
    with ties broken deterministically according to a fixed order (e.g. lexicographically). Because $\FTL$ does not use the time-order of the actions in its input, we also use $\FTL(c_1, \dots, c_{k_A}) = \argmax_{b \in \mathcal{B}} \sum_{i = 1}^{k_A} c_i \cdot u^B(a_{i}, b),$ where $c_i$ is the number of times action $a_i \in \calA$ was used by player $A$ in the last $t = \sum_{i = 1}^{k_A} c_i$ steps. 
\end{definition}

\begin{definition}[Limited-memory Empirical Risk Minimization, $\FTLm$]\label{definition:ftl-m}
    At time step $t$, the learning rule \emph{Empirical Risk Minimization with memory $m$} $\FTLm : \calA^{\min(m, t - 1)} \to \calB$ outputs the action in $\calB$ that achieves the best cumulative utility over the last $m$ actions of $A$, so
    \[
        \FTLm([a_{\max(1, t - m)}, \dots, a_{t - 1}]) = \argmax_{b \in \mathcal{B}} \sum_{i = \max(1, t - m)}^{t - 1} u^B(a_{i}, b).
    \]
    Similarly to~\Cref{definition:ftl}, we use 
    \[
        \FTLm(c_1, \dots, c_{k_A}) = \argmax_{b \in \mathcal{B}} \sum_{i = 1}^{k_A} c_i u^B(a_{i}, b)
    \]
    for $\sum_{i = 1}^{k_A} c_i \leq m$.
\end{definition}

When discussing $\FTL$ and $\FTLm$, it will be helpful to refer to the \textit{score} of an action, defined below. 

\begin{definition}[Score of an action or column]\label{definition:score}
    The \emph{score of an action} $b \in \calB$ for player $B$ given that player $A$ has played the sequence $a_1, \dots, a_t$ is $\sum_{i = 1}^t u^B(a_i, b)$, where $t \leq m $ when the opponent has memory limit $m$. Using this definition, an opponent playing $\FTL$ or $\FTLm$ selects the action which has the maximum score. When working with a payoff matrix, since player $B$ is the column player, we also refer to the score of an action as the score of a column.
\end{definition} 

\subsection{Deception and the Non-Deceptive Benchmark}

\begin{definition}[Non-deceptive play]
    Playing according to a fixed, publicly-known mixed strategy. 
\end{definition}

By this definition of non-deceptive play, the best non-deceptive strategy would be the Stackelberg leader strategy, the utility-maximizing mixed strategy against the best response of player $B$. We refer to any additional payoff a player could achieve against a particular opponent in a given game compared to the best non-deceptive strategy (i.e., a Stackelberg leader strategy) as the \textit{deception bonus}.

\begin{definition}[Deception bonus]\label{definition:deception_bonus}
    Let $\bfa$ be player $A$'s Stackelberg leader strategy, i.e. a utility-maximizing mixed strategy against the best response of player $B$. Let $\bfb = \argmax_{b \in \calB} \E_{a' \sim \bfa}\!\left[ u^B(a', b)\right]$ be the best response of player $B$. Then the \emph{deception bonus} of using an optimal sequence of actions as opposed to $\bfa$ is
    \begin{equation*}
        \left[\max_{\{a_t\}_{t = 1}^T} \sum_{t = 1}^T \bar{u}^A(a_t, b_t)\right] - 
        T \cdot E_{a'\sim\bfa}\!\left[ u^A(a', \bfb)\right],
    \end{equation*}
    where $b_t$ is the mixed action of player $B$ at time $t$.
\end{definition}

 In \Cref{definition:deception_bonus}, if player $B$ implements $\FTL$, then $b_t = \FTL([a_1, \dots, a_{t - 1}])$, and if player $B$ implements $\FTLm$, then $b_t = \FTLm([a_{\max(1, t - m)}, \dots, a_{t - 1}])$. Note that in the limit as $T \rightarrow \infty$, for any game whose payoffs are fixed (i.e.\ not dependent on $T$), if player $A$ played a fixed mixed strategy without announcing it in advance, $\FTL$ would be best responding.

\section{Structural Properties of Deception}
\label{sec:structural_observations}

\newtheorem{observation}{Observation}
\newcommand{\obs}[2]{
  \begin{observation}
  \textit{#1}\par
  \normalfont #2
  \end{observation}
}

We begin with structural results about the existence of deception and size of the deception bonus. In this section, we consider exclusively opponents who use \FTL as their learning algorithm. The rows of each game matrix correspond to the optimizer's actions, and columns correspond to the opponent's actions. We assume the opponent breaks ties between actions (columns) by choosing the leftmost tied column. Each game is played for $T$ rounds.

\begin{table}[htbp]
\centering
\caption{Game with arbitrarily large deception bonus.}
\begin{tabular}{c|c|c|c|}
    & $a$ & $b$ & $c$ \\
    \hline
    $a$ & $(-1, 1)$ & $(1, -1)$ & $(0, \frac{\epsilon}{T})$ \\
    \hline
    $b$ & $(1, -1)$ & $(-1, 1)$ & $(0, \frac{\epsilon}{T})$ \\
    \hline
    $c$ & $(0, 0)$ & $(0, 0)$ & $(B, -B)$ \\
    \hline
\end{tabular}
\label{tab:large_bonus_infinite}
\end{table}

\begin{proposition}
\label{lem:bonus-is-large}
    The deception bonus can be arbitrarily large.
\end{proposition}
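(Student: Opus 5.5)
The plan is to work directly with the game in \Cref{tab:large_bonus_infinite}, with $B>0$ a free parameter and $\epsilon>0$ small. I will show two things. First, the Stackelberg (non-deceptive) benchmark of \Cref{definition:deception_bonus} is bounded by a quantity independent of $B$, of order $\epsilon$ in total. Second, against $\FTL$ with leftmost tie-breaking, an explicit action sequence earns at least $B + \Omega(T)$. Taking $B\to\infty$ then makes the bonus arbitrarily large.

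\textbf{Step 1: bound the benchmark.} Let $A$ commit to $(p,q,r)$ on rows $a,b,c$. $B$'s expected payoffs for the columns are $p-q$ for column $a$, $q-p$ for column $b$, and $(p+q)\epsilon/T - rB$ for column $c$.

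If $B$ best-responds with column $a$ or $b$, it picks the one giving it $|p-q|$. The game restricted to rows $a,b$ and columns $a,b$ is zero-sum, and row $c$ pays $A$ nothing there. So $A$ gets $-|p-q|\le 0$.

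If $B$ best-responds with column $c$, then $(p+q)\epsilon/T - rB \ge |p-q| \ge 0$. This forces $rB \le (p+q)\epsilon/T \le \epsilon/T$, and $A$'s payoff is exactly $rB \le \epsilon/T$.

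Hence the per-round Stackelberg value is at most $\epsilon/T$, and the benchmark term $T\cdot\E[u^A]$ is at most $\epsilon$, uniformly in $B$. The one place to be careful is exact ties in $B$'s best response. Since the benchmark is defined via an argmax, I would observe that the bound $rB\le\epsilon/T$ holds whenever column $c$ is \emph{weakly} optimal, so every tie-breaking choice is covered.

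\textbf{Step 2: exhibit a deceptive sequence.} I track the three column scores (\Cref{definition:score}) under the following play.
\begin{itemize}
\item[] Round 1. All scores are $0$, so $\FTL$ plays column $a$. $A$ plays $b$ and earns $1$. The scores become $(-1,1,\epsilon/T)$.
\item[] Round 2. $\FTL$ plays column $b$. $A$ plays $a$ and earns $1$. The scores return to $(0,0,2\epsilon/T)$.
\item[] Round 3. Column $c$ now strictly leads, so $\FTL$ plays $c$. $A$ plays $c$ and collects $B$. Column $c$'s score drops to $2\epsilon/T - B$.
\item[] Later rounds. $A$ repeats the pair ($b$ then $a$). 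Each time the $a,b$ scores are tied at $0$, $\FTL$ picks column $a$; after $A$ plays $b$, $\FTL$ picks column $b$. So $A$ earns $1$ per round.
\end{itemize}
In the later rounds column $c$ regains only $\epsilon/T$ per round, so its score stays below $-B+\epsilon<0$ and it never leads again. The total payoff is therefore at least $B + (T-3)$, possibly minus $1$ for parity at the end.

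\textbf{Step 3: conclude.} The deception bonus is at least $B + T - O(1) - \epsilon$. Because $B$ is a free payoff parameter and the benchmark does not grow with it, the bonus is unbounded; even for fixed $B$ it grows linearly in $T$.

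The main obstacle is Step 1. The subtlety is that the non-deceptive leader can also exploit column $c$ slightly, by putting tiny mass $r$ on row $c$ while keeping $p=q$. I must verify this gains only $O(\epsilon/T)$ per round and cannot scale with $B$. The inequality $rB\le(p+q)\epsilon/T$ is exactly the point that handles this.
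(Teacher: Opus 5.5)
Your proof is correct and follows essentially the same route as the paper: the same game, the same sequence $b, a, c$ followed by best-responding (every later round pays $1$ regardless of parity, so you get exactly $(T-1)+B$), and a comparison against a Stackelberg benchmark that totals at most $\epsilon$. The only difference is that your Step 1 upper-bounds the leader's value over all commitments $(p,q,r)$ and all tie-breaking rules via $rB \le (p+q)\epsilon/T$. The paper instead just exhibits the $p=q$, $p_c=\epsilon/(\epsilon+BT)$ strategy without proving it optimal, so your version is slightly more rigorous.
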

\begin{proof}
    Consider the game shown in \Cref{tab:large_bonus_infinite}. If $a$ appears more than $b$ in the optimizer's history, the opponent will play $a$, and vice versa. If the history includes only $a$ and $b$ an equal number of times, the opponent will play $c$, and the optimizer can also play $c$ to get a payoff of $B$, where $B$ is some arbitrarily large number. Playing $b, a, c$ followed by best responding in the remaining $T-3$ rounds, the optimizer (row player) achieves a utility of at least $(T - 1) + B$. On the other hand, the Stackelberg leader strategy is to put equal probability on $a$ and $b$, and a small enough probability $p_c$ on $c$ such that $c$ is a best response for the column player; in particular, $p_c=\frac{\epsilon}{\epsilon + BT}$. The expected utility of the Stackelberg leader strategy over $T$ rounds is $p_c\cdot B \cdot T = \frac{\epsilon \cdot B \cdot T}{\epsilon + BT}$ which approaches $\epsilon$ as $B, T \rightarrow \infty$. 
\end{proof}

\begin{table}[htbp]
\centering
\caption{
Game where the deception bonus is $\bigomega{T}$, even when payoffs are bounded by a constant.
}
\begin{tabular}{c|c|c|c|}
    & $a$ & $b$ & $c$ \\
    \hline
    $a$ & $(-1, 1)$ & $(1, -1)$ & $(0, \frac{\epsilon}{T})$ \\
    \hline
    $b$ & $(1, -1)$ & $(-1, 1)$ & $(0, \frac{\epsilon}{T})$ \\
    \hline
\end{tabular}
\label{tab:large_bonus}
\end{table}

\begin{proposition}
\label{lem:bonus-is-linear}
    The deception bonus can be $\bigomega{T}$ even when payoffs are bounded by a constant.
\end{proposition}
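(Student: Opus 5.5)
We use the game in \Cref{tab:large_bonus}, which is the game of \Cref{tab:large_bonus_infinite} with the optimizer's row $c$ removed; every payoff now has absolute value at most $1$. The proof has two parts. First, show that the Stackelberg value of this game is essentially $0$ per round. Second, exhibit a deceptive sequence against $\FTL$ that earns roughly $T/2$, or at least a constant fraction of $T$. The difference is $\Omega(T)$.

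\emph{Non-deceptive benchmark.} Let the optimizer's mixed strategy put probability $p$ on $a$ and $1-p$ on $b$. The column player's expected payoffs are $2p-1$ for column $a$, $1-2p$ for column $b$, and $\epsilon/T$ for column $c$. Column $c$ is a best response only when $2p-1 \le \epsilon/T$ and $1-2p \le \epsilon/T$, i.e.\ when $p$ is within $\epsilon/(2T)$ of $1/2$. In that case the optimizer gets $0$. Otherwise the column player responds with $a$ (if $p>1/2$) or $b$ (if $p<1/2$), and the optimizer's expected payoff is $1-2p<0$ or $2p-1<0$ respectively. So the Stackelberg value is $0$, and the benchmark over $T$ rounds is $0$.

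\emph{Deceptive sequence.} Treat $\FTL$ tie-breaking as leftmost column. Score of column $a$ minus score of column $b$ equals $2(\#a-\#b)$, and column $c$ has score $(\#a+\#b)\epsilon/T$, which is positive and at most $\epsilon$ after at least one round. Consider the optimizer alternating $b,a,b,a,\dots$, i.e.\ playing $b$ in round 1, and afterwards playing the action that mismatches the opponent's predicted play. In round 1 all scores are $0$, so the opponent plays $a$ and the optimizer plays $b$ for payoff $1$. After an odd prefix ending in $b$, $\#b=\#a+1$. Column $b$ then leads column $a$ by $2$ and leads column $c$ as long as $2\cdot\tfrac12\cdot(\text{stuff})$ works out; more simply, column $b$'s score $\#b-\#a=1$ exceeds $c$'s score $\le\epsilon<1$. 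So the opponent plays $b$, the optimizer plays $a$, and gets $1$. After an even prefix, $\#a=\#b$: columns $a$ and $b$ score $0$ while $c$ scores $t\epsilon/T>0$, so the opponent plays $c$ and the optimizer gets $0$ whether it plays $b$ or $a$. Continuing the alternation, the optimizer earns $1$ on every odd round after the first, except the even-indexed ones, which give $0$. The total is about $T/2$, so the deception bonus is at least $T/2-O(1)=\Omega(T)$.

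The main obstacle is checking, at each parity, which column $\FTL$ selects given the small $\epsilon/T$ bonus of column $c$. The key inequality is that $c$'s score is at most $\epsilon<1$ (taking $\epsilon<1$), so a lead of $1$ in the score difference between columns $a$ and $b$ always beats $c$. Exact ties occur only when $\#a=\#b$, and there $c$ strictly wins.
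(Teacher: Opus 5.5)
Your proof is correct and follows the paper's argument. You use the same game, the same sequence $baba\ldots$ inducing the opponent to play $abcbc\ldots$, and a Stackelberg benchmark of $0$; you verify that benchmark explicitly, along with the needed condition $\epsilon<1$, whereas the paper only asserts it.

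One wording slip: your sentence about ``odd rounds after the first'' has the parity inverted. The rounds paying $1$ are round $1$ and the even-numbered rounds (those preceded by an odd-length prefix), while odd rounds from $3$ on pay $0$. Your total of roughly $T/2$ is still right.
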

\begin{proof}
    Consider the game shown in \Cref{tab:large_bonus}. In this game, the Stackelberg leader strategy for the row player is to play $a$ and $b$ with probability 1/2 each, resulting in an expected overall payoff of 0. However, if the optimizer (row player) plays $baba\dots$, then $\FTL$ will play $abcbc\dots$, with ${bc}$ repeating indefinitely, resulting in $\bigomega{T}$ utility for the optimizer. 
\end{proof}

\begin{table}[htbp]
\caption{Game with no deception bonus.}
\label{tab:no_bonus}
\centering
\(
\begin{array}{c|c|c|c|}
     & a & b \\
    \hline
    a & (0, 1) & (10, 0) \\
    \hline
    b & (1, 1) & (0, 0) \\
    \hline
\end{array}
\)
\end{table}

\begin{proposition}
\label{lem:no-deception-bonus}
    There exist games with no deception bonus.
\end{proposition}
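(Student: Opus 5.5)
We will use the game in \Cref{tab:no_bonus} and show that both terms in \Cref{definition:deception_bonus} equal $T$, so the deception bonus is exactly $0$. The key observation is that column $a$ strictly dominates column $b$ for the opponent. Whichever row the optimizer plays, the opponent's payoff is $1$ in column $a$ and $0$ in column $b$.

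\emph{Step 1: the opponent always plays $a$.} By \Cref{definition:score}, after any history $a_1,\dots,a_{t-1}$ the score of column $a$ is $t-1$ and the score of column $b$ is $0$. For $t\ge 2$, $\FTL$ therefore plays column $a$ strictly. For $t=1$ the history is empty, both scores are $0$, and the leftmost tie-breaking convention again selects column $a$. So $b_t=a$ for every $t\in[T]$, independent of the optimizer's sequence.

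\emph{Step 2: the optimal deceptive payoff is $T$.} Since $b_t=a$ for all $t$, the first term in \Cref{definition:deception_bonus} equals $\sum_{t=1}^T u^A(a_t,a)$. This sum is maximized by choosing row $b$ every round, which gives $u^A(b,a)=1$ per round and a total of $T$.

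\emph{Step 3: the Stackelberg benchmark is $T$.} Against any mixed strategy $\bfa$ of the optimizer, column $a$ yields expected utility $1$ for the opponent and column $b$ yields $0$. Hence $\bfb=a$ for every $\bfa$, and the leader's payoff is $\E_{a'\sim\bfa}[u^A(a',a)]$. This is maximized by the pure strategy $b$, giving value $1$ per round and $T$ over $T$ rounds. Subtracting gives a deception bonus of $T-T=0$.

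There is no real obstacle here. The only point needing care is the first round, where the empty history creates a tie; this is settled by the stated leftmost tie-breaking rule. It is also worth remarking why the large payoff $10$ in cell $(a,b)$ cannot be reached: no history ever makes column $b$ the opponent's choice, because column $a$ dominates for the opponent regardless of what the optimizer does.
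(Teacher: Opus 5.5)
Your proof is correct and is essentially the paper's argument: column $a$ strictly dominates for the opponent, so the opponent always plays $a$, and the optimizer's best sequence (row $b$ every round) earns exactly the Stackelberg value $T$. Your explicit treatment of the empty-history tie at $t=1$ is a detail the paper leaves implicit, and it is genuinely needed: if that tie were broken toward column $b$, the optimizer could collect $10$ in round $1$ and obtain a bonus of $9$.
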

\begin{proof}
    Consider the game in~\Cref{tab:no_bonus}. Action $a$ is strictly dominant for the opponent. Hence, regardless of the optimizer's action history, the opponent plays~$a$. Therefore, the optimizer cannot benefit from deviating from playing the best response to $a$ ($b$), which is also the Stackelberg leader strategy.
\end{proof}

\begin{table}[htbp]
\centering
\caption{Game where best-responding in every round does not yield the optimal utility for the optimizer (row player).}
\begin{tabular}{c|c|c|}
    & $a$ & $b$ \\
    \hline
    $a$ & $(0, 0)$ & $(0, 1)$ \\
    \hline
    $b$ & $(1, 1 + \frac{\epsilon}{T})$ & $(B, 0)$ \\
    \hline
\end{tabular}
\label{tab:myopic_insufficient}
\end{table}

\begin{proposition}
\label{lem:best-response-is-not-optimal}
    There exist games where myopically best-responding to the opponent's next action does not yield optimal (or near-optimal) utility.
\end{proposition}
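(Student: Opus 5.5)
The plan is to use the game in \Cref{tab:myopic_insufficient} and compare two strategies against $\FTL$ (ties broken toward the leftmost column, $0<\epsilon<1$, $B$ large): the myopic best-response strategy, and a simple two-phase deceptive strategy. I will show that the myopic payoff is exactly $T$, while the deceptive payoff is about $\frac{T}{2}B$. Since $B$ is arbitrary, the gap is unbounded, both additively and multiplicatively, so myopic play is not even near-optimal.

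\textbf{Myopic play.} For the optimizer, row $b$ weakly dominates row $a$: it gives $1\ge 0$ against column $a$ and $B\ge 0$ against column $b$. So a myopic best-responder plays $b$ in every round, whatever it predicts the opponent will do; to make this unambiguous, I break the optimizer's ties toward $b$.
\begin{itemize}
\item In round $1$ the history is empty, so the scores tie and $\FTL$ plays column $a$.
\item After $n\ge 1$ plays of row $b$, column $a$ has score $n(1+\epsilon/T)>0$ and column $b$ has score $0$.
\end{itemize}
So the opponent always plays column $a$, and the optimizer earns exactly $T$.

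\textbf{Deceptive play.} The optimizer plays row $a$ for the first $k=\lceil T/2\rceil$ rounds and row $b$ for the remaining $T-k$ rounds. When $A$ has played $n_a$ copies of $a$ and $n_b$ copies of $b$, the scores are $n_b(1+\epsilon/T)$ for column $a$ and $n_a$ for column $b$. Hence $\FTL$ plays column $b$ exactly when
\[
n_a > n_b\left(1+\frac{\epsilon}{T}\right).
\]

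In phase two $n_a=k$, and at the start of the $(j+1)$-st $b$-round we have $n_b=j$. The condition becomes $k>j+j\epsilon/T$. For $j\le k-1$ we have $j\epsilon/T<1$, so it holds whenever $j\le k-1$. Since the number of phase-two rounds is $T-k\le k$, column $b$ is played in every phase-two round, or in all but possibly the last one. Phase one contributes $0$, so the optimizer earns at least $(T-k-1)B\approx(T/2-1)B$.

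\textbf{Conclusion and main obstacle.} With $B\gg 1$ the optimum is at least $(\lfloor T/2\rfloor-1)B$, versus $T$ for myopic play, which proves the proposition. The main obstacle is the bookkeeping around the $\epsilon/T$ perturbation and the tie-breaking rule. I need to check that $j\epsilon/T<1$ really keeps the strict inequality for every $j\le k-1$, and that the empty-history tie goes to column $a$. Both only cost $O(1)$ rounds of $B$, which does not affect the conclusion.
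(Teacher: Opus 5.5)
Your proof is correct and follows the paper's argument. You use the same game and the same observation that row $b$ dominates, so myopic play earns exactly $T$ against column $a$; then you exhibit a planned sequence earning about $\tfrac{T}{2}B$.

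The only difference is the witness sequence. The paper alternates $abab\ldots$, which also sets up its later remark that a small cycle suffices in this game. Your front-loaded $a^{\lceil T/2\rceil}b^{\lfloor T/2\rfloor}$ works equally well. In fact it collects $B$ in every phase-two round, since $j\le T-k-1\le k-1$ there, so your ``all but possibly the last one'' hedge and the resulting $-1$ in the bound are unnecessary.
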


\begin{proof}
    Consider the game in~\Cref{tab:myopic_insufficient}. Let $B$ be arbitrarily large. If the optimizer best-responds to the adversary's next move, the optimizer never plays action $a$: $a$ has 0 payoff no matter what the opponent plays, while $b$ always has a positive payoff. Playing action $b$ every round gives the optimizer a payoff of 1 per round. However, if the optimizer plays $ababab...$, they get 0 in each round they play $a$ and $B$ in each round they play $b$. Therefore, the optimizer needs to plan ahead to get the optimal (or even near-optimal) utility.
\end{proof}

In \Cref{lem:best-response-is-not-optimal}, although myopically best-responding to the opponent's next action was not sufficient to achieve the optimal payoff, there was a small cycle ($ab$) which was. It is natural to wonder whether one could always do at least nearly optimally by repeatedly playing some small cycle, a cycle with length polynomial in the number of actions and payoff values. \Cref{lem:no-small-cycle} shows that this is not always possible.

\begin{table}[t]
\centering
\caption{Game where no small cyclical strategy gives optimal utility.}
\begin{tabular}{c|
>{\columncolor{gray!20}}c|
>{\columncolor{gray!20}}c|
c|
>{\columncolor{gray!20}}c|
>{\columncolor{gray!20}}c|
c|}
     & $P_{12}$ & $P_{21}$ & $\cdots$ & $P_{1k}$ & $P_{k1}$ & $\Done$\\
    \hline
    $P_1$ & $(0, p_1)$ & $(0, -p_1)$ & $\cdots$ & $(0, p_1)$ & $(0, -p_1)$ & $(0, \frac{\epsilon}{T})$\\
    \hline
    $P_2$ & $(0, -p_2)$ & $(0, p_2)$ & $(0, 0)$ & $(0, 0)$ & $(0, 0)$ & $(0, \frac{\epsilon}{T})$ \\
    \hline
    $\vdots$ & $(0, 0)$ & $(0, 0)$ & $\ddots$ & $(0, 0)$ & $(0, 0)$ & $(0, \frac{\epsilon}{T})$ \\
    \hline
    $P_k$ & $(0, 0)$ & $(0, 0)$ & $(0, 0)$ & $(0, -p_k)$ & $(0, p_k)$ & $(0, \frac{\epsilon}{T})$ \\
    \hline 
    $\Done$ & $(0, 0)$ & $(0, 0)$ & $(0, 0)$ & $(0, 0)$ & $(0, 0)$ & $(B, -B)$ \\
    \hline
\end{tabular}
\label{tab:no_small_cycle}
\end{table}

\begin{proposition}
\label{lem:no-small-cycle}
    There exist games where no small (i.e., polynomial in the number of actions and the payoff values) repeating cycle of actions can achieve any positive-factor multiplicative approximation or fixed-error additive approximation of the optimal payoff.
\end{proposition}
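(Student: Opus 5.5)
We will use the game in \Cref{tab:no_small_cycle}. All of the optimizer's payoff comes from the single cell $(\Done,\Done)$, which is worth $B$. So the optimal payoff is essentially the number of rounds in which the optimizer plays $\Done$ while $\FTL$ also plays $\Done$. The plan is to choose the $p_i$ as distinct primes. We then show two things. First, $\FTL$ plays column $\Done$ only when the history of non-$\Done$ rows is long, specifically a multiple of $L=\prod_{i} p_i$ (or a similarly large quantity). Second, the optimizer can arrange exactly that. A cycle of length polynomial in $k$ and the payoff values, i.e.\ polynomial in $k$ and $\max_i p_i$, is far shorter than $L$, which is exponential in $k$. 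Such a cycle therefore never induces $\Done$, earns $0$, and fails both a positive multiplicative approximation and a fixed additive approximation of order $B$.

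\textbf{Step 1 (score analysis).} Let $c_i$ be the count of row $P_i$ in the history. Column $\Done$ scores $\frac{\epsilon}{T}\sum_i c_i - B\,c_{\Done}$. Column $P_{1j}$ scores $p_1c_1 - p_jc_j$ and column $P_{j1}$ scores $p_jc_j-p_1c_1$. Hence $\max(P_{1j},P_{j1}) = |p_1c_1-p_jc_j|\ge 0$. Since ties go to the leftmost column, $\Done$ is chosen only if $p_1c_1 = p_jc_j$ for all $j$, the $\frac{\epsilon}{T}$ term is positive, and $c_{\Done}=0$. The condition $\frac{\epsilon}{T}\sum_i c_i > 0$ just requires some $c_i>0$, and the gap below the other columns is strict whenever the equalities hold.

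A catch is that once the optimizer plays $\Done$, its $-B$ penalty kills $\Done$ forever. I would therefore read the reward as a single payoff $B$ per run, or adjust the entry scale. The key point remains: earning $B$ requires reaching a state with $p_1c_1=\dots=p_kc_k>0$ and $c_{\Done}=0$. With distinct primes, this forces $c_i$ to be a multiple of $L/p_i$, so $\sum_i c_i \ge \sum_i L/p_i \ge L/\max_i p_i$. That quantity is exponential in $k$ when $p_i$ is the $i$th prime.

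\textbf{Step 2 (optimal play).} The optimizer plays $P_i$ exactly $L/p_i$ times, in any order, and then plays $\Done$. This collects $B$ once the horizon satisfies $T > \sum_i L/p_i$. All other payoffs are $0$, so $\mathrm{OPT}\ge B$.

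\textbf{Step 3 (cycles fail).} Suppose a cycle has length $\ell$ polynomial in $k$ and $\max p_i$, which is polynomial in $k$ by the prime number theorem. Consider repeating the cycle from the start. If the cycle contains $\Done$, then $\Done$ is first played at some step $s \le \ell < L/\max p_i$. At that point the equalities cannot hold with positive counts, and the empty history gives column $P_{12}$ by the tie order. So $\FTL$ is not on $\Done$, and every later $\Done$ is dead because $c_{\Done}\ge1$. If the cycle contains no $\Done$, the optimizer's payoff is identically $0$. Either way the payoff is $0$, while $\mathrm{OPT}\ge B$ with $B$ arbitrary. This rules out any positive-factor approximation and any fixed additive error below $B$.

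The main obstacle is the accounting in Step 1 that makes $\Done$'s payoff robust. Lingering $-B$ penalties mean $\mathrm{OPT}$ is a single $B$, so the claim must be stated against $B$ rather than as linear gain. I will also verify the tie-breaking and the strictness of the $\frac{\epsilon}{T}$ margin, so that $\FTL$ selects $\Done$ exactly at the balanced counts.
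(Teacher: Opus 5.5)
Your proposal is correct and follows essentially the same route as the paper. It uses the same game with distinct primes, the same score analysis showing that $\FTL$ plays $\Done$ only after a nonempty, $\Done$-free history in which all paired columns are balanced, and the same divisibility argument forcing at least $\sum_i L/p_i$ prior rounds (with $L=\prod_j p_j$); your Step 3 just spells out the consequence for repeated short cycles that the paper leaves implicit. One step in Step 1 should be stated rather than deferred: the paper's integrality observation that an unbalanced pair gives some column a score of at least $1$, while $\Done$'s score is at most $\epsilon(T-1)/T<1$, is what actually rules out $\Done$ in that case; tie-breaking plus ``$\ge 0$'' does not.
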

\begin{proof} Consider the game shown in \Cref{tab:no_small_cycle} for some $k \geq 2$. For each $i \in \{2, \ldots, k\}$, row $P_i$ has corresponding columns $P_{1i}$ and $P_{i1}$; row $P_1$ has entries $(0, p_1)$ and $(0, -p_1)$ in these columns, respectively, and has entry $(0, \frac{\epsilon}{T})$ in column $\Done$. Each other row $P_i$ has entry $(0, -p_i)$ in column $P_{1i}$, $(0, p_i)$ in column $P_{i1}$, $(0, \frac{\epsilon}{T})$ in column $\Done$, and $(0,0)$ elsewhere. All other entries are as shown. Let $p_1, \ldots, p_k$ be distinct primes, let $\epsilon \in (0, 1)$, and let $B>1$ be arbitrarily large. 

The optimizer's payoff is 0 at every entry except at $(\Done, \Done)$, where it is $B$, so any strategy obtaining a positive-factor multiplicative or additive-$C$ approximation must, for $B > C$, induce the opponent to play $\Done$.
    
Let $t$ denote the current round. The opponent plays $\Done$ if and only if $t>1$, all columns besides $\Done$ have score $0$, and $\Done$ has not been played before. When $t=1$, all columns have score $0$ so the opponent plays the leftmost action $P_{12}$. When $t > 1$, if $\Done$ has not been played before, the $\Done$ column has score $(t-1)\cdot \frac{\epsilon}{T} > 0$, so if all other columns have score 0, the opponent will play $\Done$. Suppose in this case another column $P_{ij}$ has a nonzero score. If $P_{ij}$ has a positive score, it must be $\geq 1$ because it is a difference of integers. There are $T$ rounds total, so the $\Done$ column will always have a score $\leq \epsilon < 1$, and the opponent would prefer $P_{ij}$ over $\Done$. If $P_{ij}$ has a negative score, its paired column $P_{ji}$ must have a positive score, so the reasoning above holds for $P_{ji}$. When $t>1$ and all non-$\Done$ columns have score 0, but $\Done$ has been played before, the opponent will not play $\Done$ as it has score $< -B + \epsilon < 0$. 
    
In order to zero out every column $P_{ij}$ while playing at least one non-$\Done$ action, $P_1$ must be played some number of times such that its contribution to each column's score is a common multiple of $p_2, p_3, \ldots, p_k$. Because they are all primes, $P_1$ must therefore be played at least $\prod_{j=2}^{k} p_j$ times. Then, to cancel out the score in each column from playing $P_1$, each other action $P_i$ must be played at least $\frac{\prod_{j=1}^{k} p_j}{p_i}$ times. If we play each action exactly the number of times specified above, we can see that each column will have a score of $\prod_{j=1}^{k} p_j - \prod_{j=1}^{k} p_j = 0$. After this sequence, the opponent will play $\Done$ as its score is positive. The minimum number of rounds required to induce $\Done$ is therefore $1 + \sum_{i=1}^{k} \frac{\prod_{j=1}^{k} p_j}{p_i}$.\footnote{For example, for $k=4$, $(p_1, p_2, p_3, p_4) = (11, 13, 17, 19)$, and any $B > 1$, $12{,}901$ rounds are required, despite the small action space and payoff values.}  
\end{proof}

Interestingly, in contrast to \Cref{lem:no-small-cycle}, in \Cref{section:half_approx} we are able to develop a constant-factor approximation against the memory-limited learning rule \FTLm using a small repeating cycle.
\section{Exact Dynamic Programs for Count-Based Learners}\label{section:dp-algorithms}

In this section, we give exact dynamic programs for finding a sequence that maximizes player~$A$'s expected cumulative payoff against any count-based learning rule $f$ or its memory-limited variant $f_m$. For count-based learners, we focus on games where the number of actions is small; for memory-limited variants, we focus on small $m$. Later, for large $m$, we give an approximation algorithm in \Cref{section:half_approx}. All expectations are over the learner's internal randomness, conditioned on player~$A$'s chosen sequence. Because the learner's mixed action at round $t$ depends only on the history, it suffices to maximize $\sum_{t=1}^{T} \bar{u}^A(a_t,\, f([a_1, \dots, a_{t - 1}]))$ over player~$A$'s action sequence $a_1, \dots, a_T$. Proofs are deferred to~\Cref{section:appendix_dp}.

\subsection{DP for Count-Based Learners} 
\label{section:DP_FTL}

We propose an exact dynamic programming algorithm with runtime $\bigo{{k_A}^2 k_B T^{k_A}}$. Let $V[c_1, \dots, c_{k_A}]$ store the maximum expected cumulative utility for player $A$ over $t = \sum_{i} c_i$ rounds, given that player $A$ played each action $a_i$ exactly $c_i$ times. The table has $\bigo{T^{k_A}}$ entries.

We initialize the table for each entry where the sum of the counts is 1 (corresponding to $t = 1$):
\begin{equation}
\label{dp_ftl:init}
    V[0,\dots, c_i = 1, \dots, 0] = \bar{u}^A(a_i, f(0, \dots, 0)) 
\end{equation}
This is the expected utility of playing action $a_i$ against whichever mixed action the count-based learning rule $f$ chooses when there are no actions that have been played before. 

Next, we fill all entries whose counts sum to $t$, starting from $t = 2$ and going in ascending order to $t = T$, so that $t = \sum_{i = 1}^{k_A} c_i$ and
\begin{align}
    V[c_1,\dots, c_{k_A}] 
    = \max_{i\in [k_A] : c_i \geq 1} \Bigg[&V[c_1,\dots,c_{i}-1,\dots,c_{k_A}]\nonumber\\
    &+ \bar{u}^A(a_i, f(c_1, \dots, c_{i}-1, \dots, c_{k_A}))\Bigg] \label{dp_ftl:recurrence}
\end{align}
The optimal overall expected utility for player $A$ is the maximum entry in $V$ where the sum of all counts adds up to $T$. 

\begin{theorem}
\label{theorem:dp-ftl}
    The Dynamic Program ($\Cref{dp_ftl:recurrence}$) computes the maximum expected cumulative payoff for player $A$ when playing against a count-based learner in time $\bigo{{k_A}^2 k_B T^{k_A}}$. We recover the optimal sequence via an auxiliary array that stores a sequence, which achieves the payoff in $V[c_1, \dots, c_{k_A}]$.
\end{theorem}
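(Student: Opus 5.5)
The plan is to prove correctness by induction on $t=\sum_i c_i$, with the invariant that $V[c_1,\dots,c_{k_A}]$ equals the maximum, over all length-$t$ sequences $a_1,\dots,a_t$ whose action counts are exactly $(c_1,\dots,c_{k_A})$, of
\[
\sum_{s=1}^{t}\bar{u}^A\bigl(a_s, f([a_1,\dots,a_{s-1}])\bigr).
\]
The key structural fact, which comes straight from \Cref{definition:count-based}, is that the learner's mixed action at round $s$ depends only on the counts of the prefix $a_1,\dots,a_{s-1}$, not on their order. So for a sequence ending in $a_i$, the last-round payoff is $\bar{u}^A(a_i, f(c_1,\dots,c_i-1,\dots,c_{k_A}))$ regardless of how the prefix was arranged. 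The first $t-1$ rounds then form a sequence with counts $(c_1,\dots,c_i-1,\dots,c_{k_A})$. Expected payoffs are computed conditional on $A$'s deterministic sequence, so by linearity of expectation the total is just the sum of the per-round expected payoffs $\bar{u}^A$.

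The steps, in order:

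\textbf{Base case.} When $t=1$, \Cref{dp_ftl:init} is exactly the payoff of the single-action sequence $a_i$ against $f(0,\dots,0)$.

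\textbf{Inductive step, upper bound.} Take any optimal sequence with counts $(c_1,\dots,c_{k_A})$ and let $a_i$ be its last action, so $c_i\ge 1$. Its value splits as (prefix value) plus (last-round payoff). The prefix value is at most $V[\dots,c_i-1,\dots]$ by the inductive hypothesis, so the sequence's value is at most the $i$-th term in \Cref{dp_ftl:recurrence}.

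\textbf{Inductive step, achievability.} Conversely, append $a_i$ to a prefix that attains $V[\dots,c_i-1,\dots]$. This yields a valid sequence with counts $(c_1,\dots,c_{k_A})$ whose value is exactly the $i$-th term, so the maximum in the recurrence is attained.

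\textbf{Overall optimum.} Every length-$T$ sequence has some count vector summing to $T$. Hence the optimum over all sequences is the maximum of $V$ over entries with $\sum_i c_i=T$.

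\textbf{Recovering the sequence.} For each entry, store the maximizing index $i$ (or a back-pointer). Walking back from the best final entry recovers the sequence in $O(T)$ steps. Storing a full sequence per entry, as the statement suggests, also works but costs extra space; the argmax pointer suffices.

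\textbf{Runtime.} The number of count vectors with sum at most $T$ is $O(T^{k_A})$. Each entry takes a max over $k_A$ choices of $i$. For each choice we evaluate $f$ on the decremented counts, which costs $O(k_Ak_B)$ by the Remark, and then compute $\bar{u}^A$, which costs $O(k_B)$. That gives $O(k_A\cdot k_Ak_B)=O(k_A^2k_B)$ per entry and $O(k_A^2k_BT^{k_A})$ in total. The fill order by increasing $t$ ensures that every predecessor entry is computed before it is needed.

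\textbf{Main obstacle.} The only subtle point is the order-invariance argument: that the prefix's contribution and the last-round payoff decouple, so that the optimal substructure holds. This relies on $f$ being a function of counts alone in the unlimited-memory setting. I would state it carefully, and perhaps as a separate claim, so that it is clear why the argument fails for $f_m$, where the sliding window makes order matter.
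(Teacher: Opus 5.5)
Your proposal is correct and follows essentially the same route as the paper: induction on $t=\sum_i c_i$, using that $f$ depends only on the counts of the prefix so the last-round payoff decouples from the ordering of the first $t-1$ actions, and a runtime of $O(T^{k_A})$ entries at $O(k_A^2k_B)$ each. Your version is somewhat more explicit, splitting the inductive step into an upper bound and an achievability direction and recovering the sequence with back-pointers rather than stored sequences. Both of these are valid refinements of the paper's argument, not a different approach.
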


\subsection{DP for Memory-Limited Count-Based Learners} 
\label{section:DP_FTL_m}

We give an analogous dynamic programming algorithm to find the maximum expected cumulative payoff for player $A$ when playing against a memory-limited count-based learning rule $f_m$ with runtime $\bigo{{k_A}^{m + 2} k_B T}$. We use a table $V$ parametrized by the time step $t$ and the last $\min(m, t)$ actions played by player~$A$. 

We initialize $V$ for each action $a_i \in \calA$ at time $t = 1$ as follows:
\begin{equation}
    V[1, [a_i]] = \bar{u}^A(a_i, f_m([\emptyset])) \label{dp_ftlm:init}
\end{equation}
This is the expected utility of playing action $a_i$ against the mixed action chosen by the learning rule $f_m$ when no actions have been played before. 

Next, for $t \leq m$ we use the update rule:
\begin{equation}
\label{dp_ftlm:recurrence-less-than-m}
    V[t, [a_1, \dots, a_t]] = V[t - 1, [a_1, \dots, a_{t - 1}]] + \bar{u}^A(a_t, f_m([a_1, \dots, a_{t - 1}])) 
\end{equation}
While, for $t > m$ the state $[a_1, ..., a_m]$ denotes the last $m$ actions of player $A$, ordered from least to most recent, and we use the update rule:
\begin{align}
    V[t, [a_1, \dots, a_m]] = \max_{a \in \calA} \Bigg[&V[t - 1, [a, a_1, \dots, a_{m - 1}]] \nonumber\\
    &+ \bar{u}^A(a_m, f_m([a, a_1, \dots, a_{m - 1}]))\Bigg] \label{dp_ftlm:recurrence-more-than-m}
\end{align}
The optimal expected cumulative utility for player $A$ is in the maximum entry of~$V$ where $t = T$.

\begin{theorem}\label{theorem:dp-ftl-m}
    The Dynamic Program in $\Cref{dp_ftlm:recurrence-more-than-m}$ computes the maximum expected cumulative payoff for player $A$ against a memory-limited count-based learner in time $\bigo{{k_A}^{m + 2} k_B T}$. We recover the optimal sequence via an auxiliary array that stores a sequence, which achieves the payoff in $V[t, [a_1, \dots, a_m]]$. 
\end{theorem}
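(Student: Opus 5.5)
The proof has three parts: correctness of the recurrence, recovery of the optimal sequence, and the running time. For correctness, I would prove by induction on $t$ the invariant that for every length-$\min(m,t)$ window $w$ of player~$A$'s actions, $V[t,w]$ equals the maximum over all sequences $a_1,\dots,a_t$ whose last $\min(m,t)$ actions are $w$ of $\sum_{s=1}^{t}\bar{u}^A(a_s, f_m([a_{\max(1,s-m)},\dots,a_{s-1}]))$.

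The key observation is that, at any round, the learner's mixed action depends only on the counts in the last $m$ actions. Those counts, and the action that drops out of the window at the next step, are determined by the ordered window. So the window is a sufficient state, and the objective splits as (optimal value up to $t-1$ for the predecessor window) plus (the payoff at round $t$). The payoff at round $t$ depends only on the predecessor window and the new action.

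The base case $t=1$ is \Cref{dp_ftlm:init}. For $t\le m$ the window is the entire history, so \Cref{dp_ftlm:recurrence-less-than-m} involves no maximization and is exact. For $t>m$, the inductive step runs as follows.
\begin{itemize}
\item A sequence ending in window $[a_1,\dots,a_m]$ at time $t$ must have had window $[a,a_1,\dots,a_{m-1}]$ at time $t-1$, for some $a\in\calA$.
\item Its round-$t$ payoff $\bar{u}^A(a_m, f_m([a,a_1,\dots,a_{m-1}]))$ depends only on that predecessor window.
\item Optimal substructure then gives the upper bound: any sequence's value is at most the recurrence value.
\item Extending an optimal predecessor sequence by $a_m$ attains that value, which gives the matching lower bound.
\end{itemize}
Taking the maximum over all windows at $t=T$ gives the optimum. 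The one point needing care is the boundary $t=m+1$, where the predecessor state is the full history of length $m$. There I would check that this history is exactly the window $[a,a_1,\dots,a_{m-1}]$ with $a$ its first element, so the two regimes glue together consistently.

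For recovery, I would store the maximizing predecessor $a$ with each entry (or the sequence itself, as the statement says) and backtrack from the best final window.

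For the runtime, there are $T$ time layers with at most $k_A^m$ windows each. Each entry maximizes over $k_A$ predecessors, so there are $O(k_A^{m+1}T)$ predecessor evaluations in total. Each evaluation needs $f_m$ on a window. By the Remark, computing the output from counts takes $O(k_Ak_B)$; the counts can be formed in $O(m)$ time, or maintained incrementally by decrementing the dropped action and incrementing the new one. Evaluating $\bar{u}^A$ then costs $O(k_B)$. This gives $O(k_A^{m+2}k_BT)$ total.

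The main obstacle I expect is the bookkeeping in this cost accounting. The count computation must be charged so that it does not add an extra factor of $m$; I would handle this by computing counts once per predecessor state and sharing them across the successor entries. Apart from that, the argument is a standard optimal-substructure induction.
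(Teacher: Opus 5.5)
Your proposal is correct and follows the paper's proof. It uses the same induction on $t$ with the last $\min(m,t)$ actions as the state, the same observation that the round-$t$ payoff depends only on the predecessor window $[a,a_1,\dots,a_{m-1}]$ and $a_m$, and the same accounting of $O(Tk_A^m)$ entries times $k_A$ predecessors times $O(k_Ak_B)$ per evaluation of $f_m$; the paper likewise just asserts that counts are tracked incrementally to avoid a factor of $m$. On that bookkeeping point, note that the counts, and indeed $f_m(w)$ itself, depend only on the window $w$ and not on $t$. Computing them once per window, rather than once per layer, costs $O(k_A^m(m+k_Ak_B))$ in total and keeps the stated bound even when $m$ exceeds $k_A^2k_B$.
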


\section{Approximation Algorithms for Memory-Limited Count-Based Learners}
\label{section:approximation-algorithms}

The exact algorithm presented previously for memory-limited count-based learners is practical when $m$ is small and $T$ is not too large. In this section, we develop three approximation algorithms 
for memory-limited count-based learners when $m$ is large, or when $m$ is small but $T$ is very large. All three algorithms assume that player $A$'s payoffs lie in $[0, H]$. Their guarantees and use cases are summarized in \Cref{table:summary}, alongside the previous exact result for comparison. 

\begin{table}[tbp]
\centering
\caption{Summary of algorithmic results for memory-limited count-based learners. The first row summarizes the previous exact result, while the remaining rows summarize the approximation results from this section. $\ALG$ denotes the cumulative expected payoff achieved by the corresponding algorithm for player~$A$, and $\OPT$ is the best cumulative expected payoff that can be achieved by player~$A$. Both expectations are over the opponent’s internal randomness. $\varepsilon$ is a tunable accuracy parameter.}
\label{table:summary}
\begin{tblr}{
    width = \linewidth,
    colspec = {lllX[l]},
    row{1} = {font=\bfseries},
    row{2} = {bg=gray!10},
    cells = {valign=m, halign=l},
    hline{1} = {1pt, solid},
    hline{2} = {0.5pt, solid},
    hline{Z} = {1pt, solid},
    rowsep = {2.5pt},
    colsep = {4pt}
}
    Result & Guarantee & Running Time & Use Case \\
    Thm. \ref{theorem:dp-ftl-m} & $\ALG = \OPT$ & $\bigo{{k_A}^{m + 2} k_B T}$ & $m$ small, $T$ not too large \\
    Thm. \ref{thm:half-approx} & $\ALG \geq \OPT / 2$ & $\bigo{k_B k_A^3 (m + 2k_A)^{2k_A}}$ & $m$ large \\
    Thm. \ref{thm:ptas} & $\ALG \geq (1 - \varepsilon) (\OPT - mH)$ & $\bigo{k_A^{\lceil m / \varepsilon \rceil + 1}k_B \lceil m / \varepsilon \rceil }$ & $m$ very small, $T$ very large, and want better than 1/2-approximation  \\
    Thm. \ref{thm:additive-approx} & $\ALG \geq \OPT - \bigo{k_A^m H}$ & $\bigo{k_A^{m} (m + k_A^{m + 1} + k_A k_B)}$ & $m$ small, $T$ very large, and the additive loss is acceptable\\
\end{tblr}
\end{table}

\subsection{\texorpdfstring{$1/2$}{1/2}-Approximation Algorithm}
\label{section:half_approx}

We propose an algorithm (\Cref{alg:half-approx}) with runtime $\bigo{k_B k_A^3(m+2k_A)^{2k_A}}$ that guarantees at least half of player $A$'s optimal expected cumulative payoff. We assume $m < T$; otherwise, one can use~\Cref{dp_ftl:recurrence} directly. 
We defer proofs to~\Cref{appendix:proofs-from-half-approx}.

The algorithm partitions the horizon into blocks of length $m$ (the last block has length $p = T \bmod m$). The key observation is that $A$ gets at least half of $\OPT$ in either the odd- or even-numbered blocks, so optimizing each and taking the better yields a $1/2$-approximation. The algorithm requires three subsequences of actions: $S_m$, a sequence maximizing $A$'s expected payoff over the first $m$ time steps (computed with~\Cref{dp_ftl:recurrence}); $S_{2m}$, a sequence of length $2m$ maximizing $A$'s expected payoff in the last $m$ time steps; and $S_{m+p}$, a sequence of length $m + p$ maximizing $A$'s expected payoff in the last $p$ time steps. Below we give a Dynamic Program for computing $S_{2m}$ and $S_{m + p}$. We leave it as an open question whether a better approximation factor can be achieved in time polynomial in~$m$.

\paragraph{Dynamic Program for computing $S_{2m}$ and $S_{m + p}$.}
Let $V[c_1,\dots,c_{k_A}][c'_1,\dots,c'_{k_A}]$ be the expected payoff of $A$ in the interval $(m,m+\sum_i c'_i]$ where in the time-interval $(m-\sum_i c_i,m]$ each action $a_i\in \calA$ is played $c_i$ times, and in the interval $(m, m+\sum_i c'_i]$, each action $a_i\in \calA$ is played $c'_i$ times. We initialize the table for each entry where the sum of the counts $c_1,\dots, c_{k_A}$ is equal to $m$:
\begin{equation}
    V[c_1,\dots,c_{k_A}][0,\dots,0]=0 \text{ where } \sum_{i=1}^{k_A} c_{i} = m
\end{equation}

We fill in the remaining entries according to the following recurrence. Start from $t = m + 1$ and fill out the entries in ascending order up to $2m$, where $t = m + \sum_{i=1}^{k_A} c'_i$. To compute the maximum expected payoff received by $A$ in the interval $(m, t]$, we iterate over all pairs of actions $(a_i, a_j)$ where $a_i$ was played at time step $t - m$ and $a_j$ is played at step $t$ by player $A$. 
For a fixed pair $(a_i, a_j)$, $A$'s maximum expected payoff at step $t$ is computed by adding her expected payoff in the current step $t$ where the opponent is a memory-limited count-based learning rule $f_m$ with a count history of $(c_1+c'_1,\dots, c_i+c'_i+1,\dots, c_j+c'_j-1+\dots+c_{k_A}+c'_{k_A})$, plus $A$'s expected payoff in step $t-1$ where the history of her actions is $(c_1,\dots,c_i+1,\dots,c_{k_A}, c'_1,\dots, c'_j-1,\dots,c'_{k_A})$:
\begin{align}
    V[c_1,&\dots,c_{k_A}][c'_1,\dots,c'_{k_A}] \nonumber\\
    &= \max_{\substack{(i, j) \in [k_A]^2 \\ c'_j \geq 1}}\Big[V[c_1,\dots,c_i+1,\dots,c_{k_A}][c'_1,\dots,c'_j-1,\dots,c'_{k_A}] \nonumber\\
    &\ + \bar{u}^A(a_j, f_m(c_1+c'_1,\dots,c_i+c'_i+1, \dots,  c_j+c'_j-1, \dots c_{k_A}+c'_{k_A}))\Big]
    \label{dp:1/2-approx}
\end{align}

\begin{algorithm}[t]
    \caption{\texorpdfstring{$1/2$}{1/2}-approximation algorithm}
    \label{alg:half-approx}
    
    \raggedright
    \setcounter{AlgoLine}{0}
    
    \KwIn{Opponent's count-based learning rule $f_m$, memory $m$, action spaces $\mathcal{A}$ and $\mathcal{B}$, time horizon $T$}
    \KwOut{Action sequence for player $A$}
    
    Divide the time horizon into blocks of size $m$, except for the last block that has size $p = T \bmod m$\;
    
    \tcc{Compute the three optimal sequences below using DP}
    $S_{m}$ $\leftarrow$ sequence maximizing $A$'s expected payoff in first $m$ time steps\;
    
    $S_{2m}$ $\leftarrow$ sequence of length $2m$ maximizing $A$'s expected payoff in last $m$ time steps\;
    
    $S_{m+p}$ $\leftarrow$ sequence of length $m+p$ maximizing $A$'s expected payoff in last $p$ time steps\;
    
    \tcc{Player $A$'s payoff is maximized in either odd or even blocks}
    
    Consider these two strategies and play the one with higher expected payoff:
    
    \textbf{Odd strategy:} Play $S_m$, then repeatedly play $S_{2m}$. If a block of length $m+p$ remains, play $S_{m+p}$. Otherwise, play arbitrary actions until time $T$\;
    
    \textbf{Even strategy:} Repeatedly play $S_{2m}$. If a block of length $m+p$ remains, play $S_{m+p}$. Otherwise, play arbitrary actions until time $T$\;
\end{algorithm}

To compute $S_{2m}$, we note that the first count vector $c_1, \dots, c_{k_A}$ must be fully exhausted, so the optimal payoff over the last $m$ time steps is $\max_{\sum_i c'_i = m} V[0, \dots, 0][c'_1, \dots, c'_{k_A}]$. 
Because $m - p$ actions remain in the first half for $S_{m + p}$, the optimal payoff over the last $p$ time steps is $\max_{\sum_i c_i = m - p, \sum_i c'_i = p} V[c_1, \dots, c_{k_A}][c'_1, \dots, c'_{k_A}]$.
In both cases, the corresponding sequence is recovered by backtracking through the table and selecting the optimal action at each step.

\begin{algorithm}[t]
    \caption{$(1-\eps)(\OPT-mH)$-approximation algorithm}
    \label{alg:PTAS}
    \raggedright
    
    \KwIn{Opponent's count-based learning rule $f_m$, memory $m$, action spaces $\mathcal{A}$ and $\mathcal{B}$, time horizon $T$, approximation factor $\eps$}
    \KwOut{Action sequence for player $A$}
    
    \SetAlgoRefRelativeSize{-1}
    \setcounter{AlgoLine}{0}
    
    \tcc{Phase 1: Find optimal sequence for first block}
    $\mathcal{S} \leftarrow$ set of all possible action sequences in the first $\lceil m/\eps \rceil$ time steps\;
    
    $s' \leftarrow \argmax_{s\in \mathcal{S}} \sum_{t=\min(m+1, T)}^{\min(\lceil m/\eps \rceil, T)} \bar{u}^A(s_t, b_t)$ where $b_t = f_m([s_{\max(1,t-m)},\dots,s_{t-1}])$ when Player $A$ plays sequence $s$\;
    
    \tcc{Phase 2: Repeat optimal sequence for full blocks}
    \For{$i = 1, \ldots, \left\lfloor \frac{T}{\lceil m/\eps \rceil} \right\rfloor$}{
        $A$ plays the action sequence $s'$ for time steps $\left((i-1)\lceil \frac{m}{\eps}\rceil, i\lceil\frac{m}{\eps}\rceil\right]$\;
    }
    
    \tcc{Phase 3: Handle remaining time steps}
    $\mathcal{S}' \leftarrow$ all possible action sequences for remaining time steps $\left(\left\lfloor \frac{T}{\lceil m/\eps\rceil} \right\rfloor \lceil \frac{m}{\eps}\rceil, T\right]$\;
    
    $s' \leftarrow \argmax_{s\in \mathcal{S}'} \sum_{t = \min\left(\left\lfloor \frac{T}{m/\eps} \right\rfloor \lceil \frac{m}{\eps} \rceil+m, T\right)}^{T} \bar{u}^A(s_t, b_t)$\;
    
    Play sequence $s'$ for remaining time steps\;
\end{algorithm}

\begin{theorem}
\label{thm:DP-S-2m}
    The Dynamic Program in \Cref{dp:1/2-approx} finds a sequence of length $2m$ that maximizes $A$'s expected payoff in the last $m$ time steps against a memory-limited count-based learning rule $f_m$. It runs in $\bigo{k_Bk_A^3(m+2k_A)^{2k_A}}$ time.
\end{theorem}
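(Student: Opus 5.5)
We argue correctness by induction on $t = m + \sum_i c'_i$, using an exchange argument tied to the sliding window, and then bound the runtime by counting table entries.

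\emph{Invariant.} For every state with $\sum_i c_i + \sum_i c'_i = m$, the entry $V[c_1,\dots,c_{k_A}][c'_1,\dots,c'_{k_A}]$ is the maximum, over all sequences of length $2m$, of $A$'s expected payoff in the interval $(m, t]$. The maximum ranges over sequences whose unused first-half suffix $(t-m, m]$ has counts $(c_i)$ and whose played second-half prefix $(m, t]$ has counts $(c'_i)$.

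\emph{Key structural fact.} At step $t \in (m, 2m]$, the opponent's window is exactly $(t-1-m, t-1]$, which equals $[t-m, m] \cup (m, t-1]$. Its counts are therefore determined by three pieces of data: the first-half counts remaining after step $t-m-1$, the action $a_i$ at step $t-m$, and the second-half counts before step $t$. Because $f_m$ depends only on these counts, the payoff at step $t$ depends only on the state $(c + e_i,\, c' - e_j)$ at time $t-1$ together with the pair $(i,j)$. The window counts are $c + e_i + c' - e_j$, which matches the argument of $f_m$ in \Cref{dp:1/2-approx}.

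\emph{Induction.} The base case $\sum_i c'_i = 0$ gives value $0$, as initialized. For the inductive step, take an optimal sequence for state $(c, c')$ and let $a_i$ be its action at $t-m$ and $a_j$ its action at $t$. Its payoff on $(m, t-1]$ is at most $V[c+e_i][c'-e_j]$ by the inductive hypothesis, so $V$ is at least the optimum. Conversely, any maximizer achieving $V[c+e_i][c'-e_j]$ can be extended by placing $a_i$ at position $t-m$ and $a_j$ at position $t$. This leaves all earlier windows unchanged and realizes the recurrence value, so $V$ is also at most the optimum.

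\emph{Main subtlety: unordered counts.} The first-half actions in $(t-m, m]$ are stored only as counts, so we must check that the ordering chosen later is consistent. The recurrence fixes the first-half ordering from the front: each step removes one action from the multiset $c$. The actions for positions $1,\dots,m$ are therefore assigned in order as $t$ increases. Since only counts enter $f_m$, any realization of the remaining multiset is valid, and backtracking reconstructs a concrete sequence.

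\emph{Extracting $S_{2m}$.} At $t = 2m$ all of $c$ is consumed, so the answer is $\max_{\sum_i c'_i = m} V[0][c']$. This also optimizes over the choice of the first $m$ actions, because every first half arises as some multiset consumed in some order.

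\emph{Runtime.} The number of count vectors of size at most $m$ over $k_A$ coordinates is $\binom{m+k_A}{k_A} \le (m+2k_A)^{k_A}$, so there are at most $(m+2k_A)^{2k_A}$ pairs. Each entry takes a maximum over $k_A^2$ pairs $(i,j)$. Each term requires one evaluation of $f_m$ plus an expectation, costing $O(k_A k_B)$ by the Remark. The total is $O(k_B k_A^3 (m+2k_A)^{2k_A})$.
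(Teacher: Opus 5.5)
Your proposal is correct and follows essentially the same route as the paper. Both argue by induction on $t$, guessing the pair $(a_i,a_j)$ at positions $t-m$ and $t$, and bound the runtime as the number of table entries times $O(k_A^2)\cdot O(k_Ak_B)$ per entry. Your exchange argument and your treatment of the unordered first-half counts spell out an ordering subtlety that the paper's one-line inductive step leaves implicit: each earlier window contains the whole block $[t-m,m]$, so only that block's multiset matters. Your entry bound $\binom{m+k_A}{k_A}^2$ differs from the paper's $\binom{m+2k_A-1}{2k_A-1}$ only cosmetically and gives the same final bound.
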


\begin{theorem}
\label{thm:half-approx}
    \Cref{alg:half-approx} guarantees a $1/2$-approximation for $A$'s expected cumulative payoff in time $\bigo{k_Bk_A^3(m+2k_A)^{2k_A}}$.
\end{theorem}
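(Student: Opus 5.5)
Let $\OPT$ be the optimal expected cumulative payoff against $f_m$, attained by some sequence $a^*_1,\dots,a^*_T$. Partition the horizon into blocks $B_1,\dots,B_L$, where each block has length $m$ except possibly the last, which has length $p = T \bmod m$. Write $\OPT_j$ for the payoff the optimal sequence collects in block $B_j$, so that $\OPT=\sum_j \OPT_j$. Since payoffs lie in $[0,H]$, every $\OPT_j \ge 0$. Therefore either the odd-indexed blocks or the even-indexed blocks carry at least $\OPT/2$. The proof shows that the odd strategy of \Cref{alg:half-approx} earns at least $\sum_{j \text{ odd}} \OPT_j$, and the even strategy earns at least $\sum_{j\text{ even}}\OPT_j$. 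The algorithm keeps the better of the two, so it gets at least $\OPT/2$.

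The key structural fact is that, against a memory-$m$ count-based learner, the opponent's mixed action at any step of block $B_j$ depends only on $A$'s actions in $B_{j-1}\cup B_j$. This holds because the window at time $t$ covers $(t-m, t-1]$ and so never reaches back before the start of $B_{j-1}$.

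For the odd strategy:
\begin{itemize}
\item Block $B_1$ sees no history, so playing $S_m$ there earns at least $\OPT_1$. This uses the optimality of $S_m$ from \Cref{theorem:dp-ftl}, applied to the first $m$ steps; within $m$ steps the limited-memory and unlimited learners coincide.
\item For each odd $j\ge 3$, the algorithm plays $S_{2m}$ on the pair $B_{j-1}B_j$. The payoff in $B_j$ is then the optimal last-$m$ payoff over all length-$2m$ sequences, by \Cref{thm:DP-S-2m}. This is at least $\OPT_j$, because the restriction of $a^*$ to $B_{j-1}\cup B_j$ is one such length-$2m$ sequence, and by the locality fact it induces exactly the same opponent behaviour in $B_j$.
\item If the final short block of length $p$ is odd-indexed, the algorithm plays $S_{m+p}$ on the last $m+p$ steps. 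The same argument gives at least $\OPT_L$ in that block.
\end{itemize}
The even strategy is handled identically, with its pairs $B_1B_2, B_3B_4,\dots$ covering exactly the even blocks. Payoffs earned in the remaining (filler) blocks are nonnegative, so they only help.

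Some care is needed when the pairs do not tile the horizon exactly. Depending on the parity of $L$, a leftover full block or a partial tail remains, and on it the strategy either plays arbitrary actions or plays $S_{m+p}$. When the last block is short and its index has the wrong parity, that block is simply one of the forfeited blocks, and it contributes nonnegative payoff. I would check each parity combination of $L$ and whether $p>0$, and confirm that every block of the designated parity is matched with a correctly placed optimal segment.

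The running time is dominated by the DP of \Cref{thm:DP-S-2m}, which gives $\bigo{k_Bk_A^3(m+2k_A)^{2k_A}}$. Computing $S_{m+p}$ reads its value from the same table, so it adds no cost. Computing $S_m$ by \Cref{dp_ftl:recurrence} costs $\bigo{k_A^2k_B m^{k_A}}$, which is also within this bound. Simulating the two strategies to compare their expected payoffs takes time linear in $T$ (or, since the play is periodic, essentially constant time), which I expect to be absorbed or treated as negligible. I expect the main obstacle to be stating the locality and substitution argument precisely, in particular showing that $S_{m+p}$ is aligned so that its first $m-p$ actions fill the window preceding the final block.
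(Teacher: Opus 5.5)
Your proposal is correct and follows the same route as the paper. You split the horizon into length-$m$ blocks, note that the odd- and even-block payoffs of an optimal sequence sum to $\OPT$, and show that each strategy of \Cref{alg:half-approx} recovers the optimum on its designated blocks, with the running time dominated by the DP of \Cref{thm:DP-S-2m}. Your locality and substitution argument, together with the parity check (which does go through in every case), spells out what the paper asserts in one line when it says the algorithm earns $\max(\OPT_e,\OPT_o)$.

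Two harmless slips. First, the memory window at round $t$ is $[t-m,t-1]$, i.e. all $m$ previous rounds, not $(t-m,t-1]$. It still lies inside $B_{j-1}\cup B_j$ because $B_{j-1}$ is a full block. Second, $S_{m+p}$ needs no special alignment: it is played on exactly the last $m+p$ rounds, so all $m$ of its first-part actions occupy $B_{L-1}$, and any of them (not just $m-p$) may influence the final block.
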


\subsection{\texorpdfstring{$(1-\eps)(\OPT-mH)$}{(1-eps)(OPT-mH)}-Approximation Algorithm} 
\label{section:ptas}

Our next approximation algorithm (\Cref{alg:PTAS}) computes a sequence for player $A$ guaranteeing expected payoff at least $(1-\eps)(\OPT - mH)$ against a memory-limited count-based learning rule $f_m$, in time $O(k_A^{m/\eps + 1} k_B (m/\eps))$. The algorithm finds the sequence of length $(m/\eps)$ maximizing payoff over its last $m/\eps - m$ time steps, then repeats it for the remaining time steps. This algorithm is particularly useful when $m$ is very small but $T$ is very large. We defer proofs to~\Cref{appendix:proofs-from-ptas}.

\begin{theorem}
\label{thm:ptas}
    \Cref{alg:PTAS} guarantees expected cumulative payoff of $(1-\eps) (\OPT-mH)$ for player $A$ in time $O(k_A^{\lceil m/\eps \rceil + 1} k_B \lceil m/\eps \rceil)$.
\end{theorem}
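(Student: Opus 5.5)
We analyze \Cref{alg:PTAS} by comparing the repeated block $s'$ against an optimal sequence $\OPT$ cut into blocks of length $L = \lceil m/\eps \rceil$. The engine of the argument is that a memory-limited count-based learner $f_m$ forgets everything older than $m$ steps. So inside any block of length $L$, the payoff earned in the last $L-m$ steps of that block depends only on $A$'s actions within the block. In particular, every step $t$ with offset greater than $m$ inside a block sees a history window lying entirely inside the block.

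The first step is to show that the algorithm achieves $N \cdot V^*$ on the full blocks, where $N = \lfloor T/L \rfloor$ and $V^*$ is the value of $s'$ on its last $L - m$ positions. By the window observation, when $s'$ is replayed in block $i$, the learner's mixed actions on positions $m+1,\dots,L$ of that block are exactly those computed in Phase 1. Hence each full block earns at least $V^*$, since payoffs are nonnegative (they lie in $[0,H]$). The same reasoning applies to Phase 3: the remainder block earns at least the best achievable value on its own positions beyond offset $m$.

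The second step is to upper bound $\OPT$. Fix an optimal sequence and split $[T]$ into the $N$ full blocks plus the remainder. In each block, the first $m$ positions contribute at most $mH$ in total per block. We account for them with a shifting argument:
\begin{enumerate}
\item Consider the $L$ possible offsets at which the block partition can start, in the spirit of the odd/even trick in \Cref{thm:half-approx}.
\item Each time step is among the ``first $m$ steps of a block'' for only $m$ of the $L$ offsets.
\item Hence some offset loses at most an $m/L \le \eps$ fraction of $\OPT$ (plus boundary terms bounded by $mH$).
\end{enumerate}
For that offset, each block's payoff on its last $L-m$ positions is at most $V^*$, because the window observation makes it a function only of the in-block actions, and $s'$ maximizes exactly this quantity. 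The leftover partial segments at the two ends are dominated by the Phase 3 optimization and an additive $mH$.

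Combining the two steps gives $(1-\eps)\OPT - mH \le \ALG$, which implies the stated $(1-\eps)(\OPT - mH)$ bound. The running time follows from enumerating the $k_A^{L}$ sequences in Phase 1, each evaluated in $O(L k_A k_B)$ time, or $O(L k_B)$ with incremental count updates; Phase 3 is no larger.

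The main obstacle is the shifting bookkeeping. Our algorithm always starts its blocks at time $0$, whereas the good offset for $\OPT$ may differ. We will handle this by noting that, for any offset, $\OPT$'s value is at most (number of complete shifted blocks)$\cdot V^*$ plus at most $L$ boundary steps. We then check that $N V^*$ covers this up to the $mH$ additive term. Where the crude boundary bound $LH$ is too large, we would instead use the remainder optimization of Phase 3 to absorb it.
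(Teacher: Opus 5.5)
Your proposal has a genuine gap, and it sits in the step you defer. It cannot be closed, because the stated bound is false.

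\textbf{The final implication.} Your last step runs the wrong way. Since $(1-\eps)(\OPT-mH)=(1-\eps)\OPT-mH+\eps mH$, the inequality $\ALG\ge(1-\eps)\OPT-mH$ is strictly weaker than the claim, not stronger.

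\textbf{The boundary bookkeeping.} Let $L=\lceil m/\eps\rceil$, $N=\lfloor T/L\rfloor$ and $p=T-NL$. Let $g(\ell)$ be the best payoff on the last $\ell$ steps of a sequence of length $m+\ell$. Then $V^*=g(L-m)$, and Phase~3 earns at least $g(\max(p-m,0))$.

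Run your averaging only over the full-window steps $t>m$; that is what turns $(1-\eps)\OPT$ into $(1-\eps)(\OPT-mH)$. Suppose it returns an offset $r\le p$. Then there are $N$ complete shifted blocks. The two leftover full-window pieces of $\OPT$ (inside $[m+1,r]$, and after the last block) can both be charged to Phase~3 via the concatenation inequality $g(a)+g(b)\le g(a+m+b)$. Your sketch needs this inequality but never states it. This case gives $\ALG\ge(1-\eps)(\OPT-mH)$.

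Nothing forces $r\le p$, however. When $r>p$, only $N-1$ shifted blocks are complete, and the uncovered region has length $L+p$, not at most $L$. Charging it to one extra $V^*$ plus the Phase~3 value costs an additional $\min(p,m)H$. So your claim that the leftovers are covered by Phase~3 plus a single $mH$ fails. The argument then yields only $\ALG\ge(1-\eps)(\OPT-mH)-mH$.

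\textbf{A counterexample to the stated bound.} The extra loss is real. Take:
\begin{itemize}
\item $m=1$ and $\eps=1/3$, so $L=3$, with $T=4$;
\item optimizer actions $a,x,y$;
\item $f_1$ plays $r_e$ on the empty history, and $r_1,r_x,r_0$ after $a,x,y$ respectively;
\item $u^A(a,r_e)=u^A(y,r_1)=H$, $u^A(a,r_x)=\delta$, and all other payoffs $0$.
\end{itemize}
The sequence $a,y,a,y$ is optimal with $\OPT=3H$, so $(1-\eps)(\OPT-mH)=4H/3$.

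Phase~1 scores rounds $2$--$3$. There $(x,a,y)$ scores $H+\delta$ and every other sequence scores at most $H$, so it is the unique maximizer. It earns $0+\delta+H$ in rounds $1$--$3$. In round $4$ the learner plays $r_0$, so that round earns $0$. Hence $\ALG=H+\delta<4H/3$ whenever $\delta<H/3$.

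In your terms, the only good offset here is $r=2>p=1$. Replacing $y$ by a chain $y_1,\dots,y_c$ (with $f_1$ rewarding $y_{i+1}$ after $y_i$ and playing $r_0$ after $y_c$) gives the same failure for every $\eps=1/(c+2)$ with $T=c+3$.

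\textbf{The paper's route.} The paper argues that each block earns at least $\frac{L-m}{L}\OPT_L$. That also fails on this instance: $\OPT_3=2H$, so $\frac{L-m}{L}\OPT_L=4H/3$, while the first block earns $H+\delta$. So neither argument establishes the theorem as stated. Your shifting argument, completed as above, does prove the corrected bound $\ALG\ge(1-\eps)(\OPT-mH)-mH$.
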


\subsection{\texorpdfstring{$\bigo{k_A^m H}$}{O(k\_A\^m H)}-Additive Approximation Algorithm}
\label{section:additive_approx}

Finally, we present a complementary algorithm for any finite memory $m \geq 1$ with expected cumulative payoff of at least $\OPT - \bigo{k_A^m H}$. As $T \to \infty$, this additive penalty becomes asymptotically negligible. The algorithm runs in time $\bigo{k_A^{2m + 1} + m k_A^{m + 1} k_B}$. Since the runtime is independent of $T$ but is exponential in $m$, this algorithm is again useful when $m$ is small and $T$ is large. When $m$ and $T$ are both small, the exact DP (\Cref{dp_ftlm:recurrence-more-than-m}) is preferable; when $m$ and $T$ are both large, the $1/2$-approximation (\Cref{alg:half-approx}) is the practical choice.

We modify the approach of repeating action sequences that we already used in \Cref{section:half_approx,section:ptas}. We model the memory-limited learning rule as a state space over sequences of $m$ actions. Karp's maximal mean weight cycle algorithm~\cite{karp1978characterization,chaturvedi2017note} allows us to extract an optimal repeating sequence of actions. Repeating this optimal cycle gives that the optimizer only incurs an additive utility loss at the prefix and suffix of the time horizon. Because the state space has size $k_A^m$, the maximum length of any simple cycle is bounded by $k_A^m$, so the algorithm loses at most $\bigo{k_A^m H}$ in utility. The formal construction and proofs are deferred to~\Cref{section:additive-approx-proofs}.

\begin{theorem}
\label{thm:additive-approx}
    There exists an algorithm that guarantees an expected cumulative payoff of at least $\OPT - \bigo{k_A^m H}$ for Player $A$ against a memory-limited count-based learning rule $f_m$, which runs in $\bigo{k_A^{2m + 1} + k_A^{m} (m + k_A k_B)}$ time.
\end{theorem}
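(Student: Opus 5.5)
We model the interaction with $f_m$ as a weighted directed graph $G$ and reduce the problem to a maximum mean-weight cycle, solved with Karp's algorithm~\cite{karp1978characterization,chaturvedi2017note}.

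\textbf{The graph.} The vertices of $G$ are the $k_A^m$ windows $w=(w_1,\dots,w_m)\in\calA^m$, which are the possible states of the opponent's memory once $t>m$. For each $a\in\calA$ there is an edge $w\to(w_2,\dots,w_m,a)$. Its weight is $\bar u^A(a,f_m(w))$, the expected payoff of playing $a$ when the memory is $w$. Every window has out-degree $k_A$, so $G$ has $k_A^{m+1}$ edges. Computing each weight costs $O(k_Ak_B)$ after the counts of the window are built in $O(m)$. Building $G$ therefore takes $O(k_A^m(m+k_Ak_B))$ time. Karp's algorithm runs in $O(|V||E|)=O(k_A^{2m+1})$ time and returns a simple cycle $C$ of maximum mean weight $\mu^*$, with $|C|\le k_A^m$.

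\textbf{Upper bound on $\OPT$.} Take any sequence of $T$ actions. After its first $m$ rounds, the remaining $T-m$ rounds form a walk of length $T-m$ in $G$, and the payoff of each round equals the weight of the corresponding edge. Each of the first $m$ rounds contributes at most $H$. We decompose the walk greedily by repeatedly removing simple cycles. This leaves a set of simple cycles together with a residual simple path of fewer than $k_A^m$ edges. Each simple cycle has total weight at most $\mu^*$ times its length, and each residual edge has weight at most $H$. Hence
\[
\OPT\le mH+(T-m)\mu^*+k_A^mH.
\]
Because weights are nonnegative, $\mu^*\ge 0$. This lets us replace the coefficient on $\mu^*$ by its length bound without worrying about sign.

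\textbf{The algorithm and its guarantee.} We first spend at most $m+k_A^m$ rounds to reach a vertex of $C$. The first $m$ rounds play that vertex's window directly. If $m$ is too small for this, we instead walk in $G$, which is strongly connected with diameter at most $m$. Each of these rounds earns at least $0$. We then traverse $C$ repeatedly for the remaining rounds, cutting the last partial traversal when the horizon ends. The completed traversals earn at least $\mu^*$ times their total length. The rounds lost to the prefix and to the final partial traversal number $O(m+k_A^m)$ in total. Their contribution is at least $0$, whereas in the upper bound each such round is worth at most $\mu^*\le H$. Therefore
\[
\ALG\ge (T-O(m+k_A^m))\mu^*\ge \OPT-O(k_A^mH),
\]
using $m\le k_A^m$ when $k_A\ge2$. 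In the trivial case $k_A=1$ there is only one possible sequence.

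\textbf{Main obstacle.} The cycle-decomposition bound is the step that needs care. The walk must be set up so that edge weights equal the true payoffs once $t>m$, so rounds $t\le m$, where the learner's memory is shorter than $m$, are charged separately. Each discarded segment must also be bounded by $k_A^m$ edges, and this holds because it is a simple path in a graph with $k_A^m$ vertices.
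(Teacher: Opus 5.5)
Your proof is correct. It uses the same construction as the paper: the transition graph $G_m$ on $\calA^m$ and Karp's maximum mean-weight cycle, with the same running-time accounting. Where you differ is in how you bound the additive loss.

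The paper argues by surgery on an optimal sequence. By pigeonhole, some state is visited three times within the first $2k_A^m+1$ edges. The segments after that point are replaced by repetitions of a max-mean simple cycle through that state, and the end is padded, for a total loss of $(3k_A^m+m)H$.

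You instead decompose the entire optimal walk after round $m$ into simple cycles plus a residual simple path of fewer than $k_A^m$ edges. This yields an explicit upper bound $\OPT \le mH + (T-m)\mu^* + k_A^m H$ in terms of the global optimum $\mu^*$. You also make the algorithm's play explicit: play the window of a vertex of $C$, then loop $C$. This gives $\ALG \ge (T-m-k_A^m)\mu^*$.

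Your route has three advantages:
\begin{itemize}
\item It avoids the pigeonhole step.
\item It compares $\OPT$ directly with the cycle the algorithm actually outputs. The paper's substitution uses the best cycle through the repeated state, and does not spell out how play reaches $C^*$.
\item It improves the constant to $(2k_A^m+m)H$.
\end{itemize}
Both arguments need $m = \bigo{k_A^m}$ to write the loss as $\bigo{k_A^m H}$, which you correctly note.

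One small cleanup: the remark about walking in $G$ when ``$m$ is too small'' is unnecessary and reads confusingly. Playing the $m$ actions of the window always reaches that vertex in exactly $m$ rounds when $T \ge m$. For $T < m$ the claim is trivial, since $\OPT \le TH < mH$.
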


For instance, for $m = 1$ \Cref{thm:additive-approx} gives a $\bigo{k_A H}$-additive approximation in $\bigo{k_A^3 k_B}$ time.

\section{Deceiving an Unknown Opponent}
\label{sec:unknown-opponent}
    Suppose the optimizer does not know the particular learning rule that the opponent is using, but does know that it belongs to some family $\mathcal{F}$ of count-based learning rules, each with memory length $m$. Can the optimizer learn to deceive their opponent in this setting? 

    A natural idea is to try to quickly determine which opponent we are facing, and then play optimally against that opponent. Since a count-based opponent with memory length $m$ chooses their next action based on only the counts of our previous $m$ actions, we might intuitively try something like the following:
    
    \begin{enumerate}
        \item while we have not yet determined the true opponent: 
        \begin{enumerate}
            \item play any $m$ actions against which at least two opponents have a different response
            \item observe the response, eliminating at least one possible opponent
        \end{enumerate}
        \item compute and play the optimal sequence against the true opponent for the remaining rounds
    \end{enumerate}

    With some assumptions (similar to those below) this idea can be used for deterministic count-based opponents. The learning process would require up to $O(m \abs{\calF})$ rounds, after which we can use a DP to compute the optimal sequence against the true opponent and play that sequence for the remaining rounds.

    However, we would like to handle \textit{any} count-based opponent, including randomized opponents. In that case, we cannot necessarily eliminate any opponent by simply observing the response to a full history: even if two randomized opponents have a different response distribution to a sequence of $m$ actions, it may take many repetitions of that sequence to have high confidence about which we are playing against. This motivates our approach, which leverages existing results for stochastic bandits; this is a natural connection as in both settings we must balance learning about unknown reward distributions against maximizing overall payoff. 

    We develop a 1/2-approximation up to $O(\sqrt{|\mathcal{F}|mT} + m$) regret compared to the optimal cumulative payoff against the actual opponent (\Cref{alg:unknown-opponent}), under two assumptions. We assume there exists a ``blank'' action for which both players always have zero payoff; when played $m$ times, it effectively clears the history, restoring the opponent's behavior to what it would be at the beginning of the game. We also assume that all payoffs for the optimizer lie in $[0, 1]$. This algorithm is practical when $m$ and $|\mathcal{F}|$ are small relative to $T$ (it is an asymptotic 1/2-approximation when we consider them constants).

\subsection{1/2-Approximation up to $O(\sqrt{|\mathcal{F}|mT} + m$) Regret} 

At a high level, this algorithm combines a similar idea to our 1/2-approximation (\Cref{alg:half-approx}) with a reduction to stochastic bandits. 

\begin{algorithm}[t]
\caption{$1/2$-approximation up to $O(\sqrt{|\mathcal{F}|mT} + m$) regret against an unknown opponent from count-based family $\mathcal{F}$ with memory $m$}
\label{alg:unknown-opponent}
\raggedright
\setcounter{AlgoLine}{0}

\KwIn{Family of possible opponent learning rules $\mathcal{F}$ (each count-based with memory length $m$), stochastic bandits algorithm $\mathsf{Alg_{SB}}$}
\For{$f$ in $\mathcal{F}$}{
    $S_{f2m} \gets$ sequence maximizing $A$'s expected payoff in the last $m$ time steps if the opponent is using learning rule $f$
}

$T' \gets \lfloor \frac{T}{2m}\rfloor$

Run $\mathsf{Alg_{SB}}$ over $T'$ rounds, where we have $|\mathcal{F}|$ arms, pulling arm $i$ is equivalent to playing $S_{f2m}$ and the realized payoff is the realized payoff over the second $m$ steps of $S_{f2m}$. 

Play arbitrary actions for the remaining $T-2mT'$ time steps.
\end{algorithm}

The guarantee stated in \Cref{thm:unknown-opponent} depends on the specific stochastic bandits algorithm used in \Cref{alg:unknown-opponent}; we assume the use of MOSS from \cite{stochastic_bandits} for concreteness and the simplicity of the bound. \cite{stochastic_bandits} derive a regret bound of $O(\sqrt{KT})$ for the $K$-armed stochastic bandit problem with bounded rewards. The proof of \Cref{thm:unknown-opponent} is deferred to \Cref{appendix:missing-proof-unknown-opponent}. 

\begin{theorem}
\label{thm:unknown-opponent}
    \Cref{alg:unknown-opponent} gives a 1/2 approximation up to $O(\sqrt{|\mathcal{F}|mT} + m$) regret to the optimal expected payoff against the actual opponent. 
\end{theorem}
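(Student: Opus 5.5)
We need to prove that \Cref{alg:unknown-opponent} achieves payoff at least $\OPT/2 - O(\sqrt{|\mathcal{F}|mT} + m)$, where $\OPT$ is the optimal expected payoff against the true opponent $f^* \in \mathcal{F}$. The plan has three pieces: a structural upper bound on $\OPT$ via blocks, a reduction of the algorithm's payoff to stochastic bandit reward, and an application of the MOSS regret bound.

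\textbf{Step 1 (benchmark).} Partition $[T]$ into consecutive blocks of length $m$, exactly as in the proof of \Cref{thm:half-approx}. Let $v^*$ be the maximum expected payoff that any length-$2m$ sequence, started from an empty history, can earn in its last $m$ steps against $f^*$. This is exactly the value attained by $S_{f^*2m}$, by \Cref{thm:DP-S-2m}.

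For any block after the first, the opponent's behavior in that block depends only on the actions in the preceding block and the current block, because its memory is $m$. Whether the history is empty at the start of the preceding block does not matter, since the preceding block is $m$ steps long and fills the whole memory. Hence the optimal play earns at most $v^*$ in every such block. The first block earns at most $v^*$ too: its own opponent behavior is that of the second half of a sequence preceded by a "blank" effective history, which can be simulated by prefixing $m$ blank actions. This is where we use the blank-action assumption.

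The final partial block contributes at most $m$, since payoffs lie in $[0,1]$. Therefore
\[
\OPT \le \lceil T/m\rceil v^* + m \le (2T'+2)v^* + 2m ,
\]
and so
\[
T' v^* \ge \OPT/2 - O(m).
\]
Here we use $v^*\le m$ and $T/m \le 2T'+2$.

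\textbf{Step 2 (reduction to bandits).} Each bandit round plays a full length-$2m$ sequence $S_{f2m}$. We need the second half's payoff in each round to be an i.i.d. draw from a fixed distribution with mean $\mu_f$. The first $m$ actions of each round completely overwrite the opponent's memory before the second half begins. So the opponent's behavior during the second half depends only on $S_{f2m}$ and its fresh internal randomness, and not on previous rounds. Treat the first half of each round as earning $\ge 0$ and ignore it.

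The reward of each round lies in $[0,m]$. We rescale by $1/m$ so rewards lie in $[0,1]$, matching the MOSS setting. The arm $f^*$ has mean exactly $v^*$, so $\max_f \mu_f \ge v^*$. This holds even though every arm is evaluated against the true opponent $f^*$, not against $f$.

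The main subtlety is this independence claim. It is exactly the property that the first $m$ steps of each round reset the history, and the bandit rewards' independence across rounds relies on it.

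\textbf{Step 3 (regret).} MOSS with $K=|\mathcal{F}|$ arms and rescaled rewards gives
\[
\E[\text{reward}] \ge T'\max_f\mu_f - m\cdot O\!\left(\sqrt{|\mathcal{F}|T'}\right).
\]
Since $T'\le T/(2m)$, we have $m\sqrt{|\mathcal{F}|T'} = O(\sqrt{|\mathcal{F}|mT})$. Combining with Step 1 gives
\[
\ALG \ge T'v^* - O(\sqrt{|\mathcal{F}|mT}) \ge \OPT/2 - O\!\left(\sqrt{|\mathcal{F}|mT}+m\right).
\]
The leftover $T-2mT'$ steps of arbitrary play earn nonnegative payoff, so they do not hurt this bound.
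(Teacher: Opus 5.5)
Your proposal is correct and follows essentially the same route as the paper. Both arguments bound the optimum's payoff in every length-$m$ block by the best second-half value $v^*$, using a blank-padded history for the first block, and then reduce to a $|\mathcal{F}|$-armed bandit over $\lfloor T/(2m)\rfloor$ rounds, applying MOSS with rewards rescaled by $1/m$. Your direct bound $\OPT \le \lceil T/m\rceil v^* + m$ is a cleaner phrasing of the paper's odd/even-block argument. Your memory-reset argument, that the per-round bandit rewards are i.i.d., makes explicit a point the paper leaves implicit.
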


\begin{remark}
    When $m$ and $|\mathcal{F}|$ do not scale with $T$, \Cref{alg:unknown-opponent} achieves an asymptotic 1/2  approximation. In that case, the regret is sublinear in $T$. Given the opponent's bounded memory and the existence of the ``blank'' action, any instance with a nonzero optimal value satisfies $\OPT \in \Theta(T)$, so the regret term vanishes in comparison as $T \rightarrow \infty$. If $\OPT=0$ the guarantee holds trivially since all payoffs are non-negative.
\end{remark}

\section{Hardness Results}
\label{section:hardness-results}

In this section we show that computing the optimal action sequence against \FTL is NP-hard to approximate. The proof is deferred to \Cref{appendix:missing-proofs-hardness}.

\begin{theorem}
\label{thm:np_hardnes}
    It is NP-hard to compute a sequence of actions which attains payoff within any constant multiplicative factor or any $T^\alpha$ additive error (for constant $0 < \alpha < 1$) of the optimal payoff against \FTL, even in zero-sum games.
\end{theorem}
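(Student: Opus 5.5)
We will prove this by a gap-producing reduction modeled on the construction of \Cref{lem:no-small-cycle}. There, inducing the special column $\Done$ (the only entry giving the optimizer a nonzero payoff) required the score of every non-$\Done$ column to be exactly zero, and the payoff matrix used only zero entries for the optimizer elsewhere. We reduce from an NP-complete exact-balancing problem. Our first choice is \textsc{Exact Cover by 3-Sets} (or a variant of \textsc{Subset Sum} in unary-bounded form), encoded so that the optimizer's rows correspond to sets and the opponent's columns correspond to universe elements, each in a $\pm$ pair. The opponent's payoffs in a paired column are $+1$ or $-1$ according to incidence with an extra ``target'' row. Reaching a state in which all paired columns have score zero then corresponds exactly to choosing counts that solve the instance. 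Because \FTL depends only on counts, the order of play is irrelevant for reaching the zero-score state, as in \Cref{lem:no-small-cycle}. To keep the solution counts polynomial, we will force them to be $0/1$-like, for instance by adding penalizing columns that become the leader if any set row is used more than once.

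To make the game zero-sum, we set the optimizer's payoff to the negation of the opponent's. The difficulty is that the optimizer then receives $\mp 1$ in the balancing columns. We will handle this by scaling the opponent's balancing payoffs by a tiny factor $\delta$, and the $\Done$ column by an even smaller $\epsilon/T$-type factor, so that they still dominate \FTL's decisions exactly as before. The optimizer's payoff on those entries is then only $O(\delta)$ per round, while $(\Done,\Done)$ gives the optimizer a payoff $B$ (the opponent $-B$). We also ensure that once $\Done$ has been played, its score becomes hugely negative, so $\Done$ can be induced only once per ``reset.'' To amplify the result into the $T^\alpha$-additive bound, we will instead allow repeated triggering. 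After collecting $B$, the optimizer plays a complementary block that re-zeros the scores, including a compensating row that cancels the $-B$ on $\Done$'s score. This lets a YES instance trigger $\Done$ about $T/\mathrm{poly}(n)$ times, whereas in a NO instance the zero-score state is never reached and the total payoff is at most $O(\delta T)$. Choosing $T$ polynomially large in $n$ and $B$ and $\delta$ appropriately makes the gap between $\Omega(BT/\mathrm{poly}(n))$ and $O(\delta T)\le 1$ exceed any constant factor and any $T^\alpha$ additive error. This last step uses $T^\alpha = o(T/\mathrm{poly}(n))$ once $T = n^{c}$ for a large enough $c$ depending on $\alpha$.

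The key steps, in order, are as follows. First, specify the matrix and verify the \FTL behavior lemma: \FTL plays $\Done$ iff all balancing scores are zero and $\Done$'s score is positive. Second, prove completeness, namely that a YES certificate yields a polynomial-length cycle that can be repeated. Third, prove soundness, namely that any sequence reaching the zero state decodes to a certificate. Fourth, carry out the parameter choices for the multiplicative and additive gaps.

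The main obstacle is soundness once repetition and zero-sum payoffs are allowed. The optimizer might reach the all-zero state using unintended combinations, such as multiplicities above one or uses of the reset/compensating rows, which do not decode into a valid certificate. We will first try to address this by giving the compensating row a distinct large-magnitude signature on a private column, so that its count is pinned to equal the number of $\Done$ plays. We will also add per-row ``overuse'' columns whose scores are nonnegative combinations vanishing only at valid multiplicities.
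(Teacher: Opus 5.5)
\noindent There is a genuine gap in the amplification step. Both of your gap calculations are done in that regime: you take $T=n^{c}$ and claim that a NO instance never reaches the zero-score state. That claim fails. Under $\FTL$, every column's score is a linear form in the count vector with no constant term, so the set of histories at which $\FTL$ plays $\Done$ is a cone: if counts $c$ trigger $\Done$, so does $kc$ for every $k\ge 1$. Your ``overuse'' and ``pinning'' columns are also linear forms, so they can only impose ratio constraints such as $c_j\le c_R$. No such column can vanish ``only at valid multiplicities,'' and nothing except the horizon limits how often the target row is used.

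Concretely, the X3C instance with sets $\{1,2,3\},\{1,4,5\},\{2,4,6\},\{3,5,6\}$ has no exact cover, since any two of the sets intersect. Yet playing each set once and the target row twice zeros every paired column and satisfies every $c_j\le c_R$, so $\Done$ fires at round $7$. With $T$ polynomially large, such $k$-fold covers, or any scaled rational solution of the fractional exact-cover LP, fit within the budget. So NO instances trigger $\Done$, possibly many times, and the $O(\delta T)$ bound fails. The reset machinery does not help, because its extra columns are also linear in the counts. More generally, with no horizon constraint, whether $\Done$ can ever be triggered is an LP feasibility question over this cone. Soundness therefore has to come from a tight horizon, not from the payoff structure.

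\noindent The missing idea is that you do not need amplification: the theorem places no bound on payoffs, so a single trigger with a tight horizon gives both parts. Take $T=q+2$. Any nontrivial zero state uses $c_R(q+1)$ rounds, so the budget forces $c_R=1$, which means the played sets form an exact cover. Now choose $B>2T+T^{\alpha}$ and $B>T(1+1/\gamma)$ for a target factor $\gamma$. Then any $\gamma$-multiplicative or $T^{\alpha}$-additive approximation must collect $B$, and the cover can be read off from its sequence.

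This is exactly the paper's mechanism, with 3-SAT in place of X3C. The paper uses rows $X_{iT},X_{iF}$, columns $X_i$ and $C_j$, and a $\Done$ column with payoff $-n+2$ rather than $\epsilon/T$. Its tight horizon $T=n+1$ forces each variable to be assigned exactly once, and it takes $L>T(2n-2)+T^{\alpha}$. Its Part 2 does use $c$ diagonal copies, but with $T=c(n+1)$, so the budget stays exactly $n+1$ per copy. The additive gap there comes from the size of $L$, not from many cheap repetitions. With this repair, your X3C pair-column reduction is a valid alternative route.

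Separately, Subset Sum with unary-encoded numbers is in P, so that fallback source problem would prove nothing. The binary version is fine, since payoffs may be large.
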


\Cref{thm:np_hardnes} improves upon a hardness result of Assos et al.~\cite{assosmaximizing}, who show inapproximability within constant additive error in general-sum games. In contrast, our result holds \emph{even} in the case of zero-sum games and strengthens the additive bound to $T^\alpha$ for any constant $0 < \alpha < 1$.
In a related line of work, Assos et al.~\cite{assos2025computational} give an $\bigomega{T}$ additive hardness bound for a different opponent learning algorithm---Multiplicative Weights Update.
\section{Experiments}

As we showed in \Cref{sec:structural_observations}, depending on the game, the deception bonus can be arbitrarily large (\Cref{lem:bonus-is-large}), or there might be no gains from deception at all (\Cref{lem:no-deception-bonus}). To explore what the gains from deception might typically look like, we empirically measured the deception bonus in random zero-sum and general-sum games. The deception bonus is computed with respect to a particular opponent (see \Cref{definition:deception_bonus}); for concreteness in these experiments, we used $\FTL$. 

\paragraph{Experimental design}
For a random zero-sum game (described by payoff matrices $A$ and $B$ for each player), payoffs for $A$ are drawn independently and uniformly at random from the real interval $[-1, 1]$, and $B=-A$. For a random general-sum game, payoffs for both $A$ and $B$ are drawn independently and uniformly at random from $[-1, 1]$. Each experiment (corresponding to a point on the plots below) was run for 500 independent trials. \footnote{Source code: \url{https://github.com/dimitarch/deception-code}}

\begin{figure}[tb]
    \centering
    \begin{minipage}[t]{0.48\linewidth}
        \centering
        \includegraphics[width=\linewidth, trim=0mm 0mm 0mm 0mm, clip]{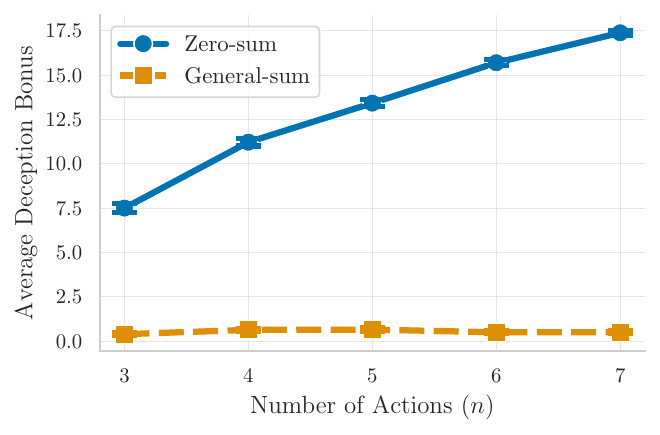}
        \caption{Deception bonus vs.\ number of actions for $T = 25$ rounds. Error bars show standard error over 500 trials.}
        \label{figure:bonus-vs-actions-t25}
    \end{minipage}
    \hfill
    \begin{minipage}[t]{0.48\linewidth}
        \centering
        \includegraphics[width=\linewidth, trim=0mm 0mm 0mm 0mm, clip]{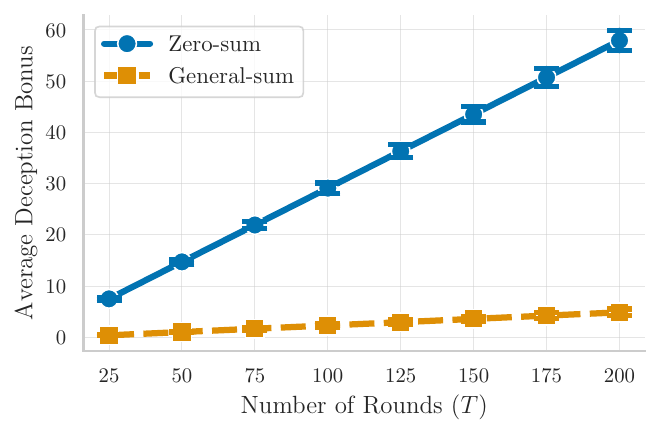}
        \caption{Deception bonus vs.\ number of rounds for $n = 3$ actions. Error bars show standard error over 500 trials.}
        \label{figure:bonus-vs-time-n3}
    \end{minipage}
\end{figure}

\paragraph{Findings} We summarize our primary findings below.
\begin{itemize}
    \item The average deception bonus is substantially larger in random zero-sum games than in random general-sum games, where it was consistently low (\Cref{figure:bonus-vs-actions-t25,figure:bonus-vs-time-n3}). This suggests that on average deception tends to be lucrative in competitive settings while providing little advantage when opportunities for collaboration exist.

    \item  In zero-sum games the deception bonus increases with the number of actions $n$  (\Cref{figure:bonus-vs-actions-t25}). One explanation is that in a random game, as $n$ grows it is more likely that there exists a high-payoff response to each of the opponent's actions; the optimizer can take advantage of this due to the opponent's determinism. In general-sum games, however, the deception bonus remains relatively flat as $n$ increases. This may be because larger action spaces also increase the probability of mutually beneficial actions, which reduce the advantage of deceptive play.

    \item In both zero-sum and general-sum games, the deception bonus is roughly linear in the number of rounds (\Cref{figure:bonus-vs-time-n3}), indicating that the per-round gains from deception remain relatively constant for a fixed $n$ as the number of rounds varies.
\end{itemize}

\section{Discussion and Future Work}

We introduced a new game-theoretic model of deception in repeated games. We provided structural results, a hardness of approximation result for optimizing against Empirical Risk Minimization which holds even in zero-sum games, and experimental results on the deception bonus in random games. We also presented a variety of algorithmic results for optimizing against a count-based opponent: exact dynamic programming algorithms for when the game or the opponent’s memory is small, three approximation algorithms---a $1/2$-approximation with a time-complexity polynomial in the opponent's memory $m$, an asymptotic PTAS and an additive-approximation whose runtimes are exponential in $m$---and an algorithm for learning to optimize against an unknown count-based opponent. Our model captures deception over time, which we believe is an interesting and underexplored direction; we hope it will lead to further work in this area.

\paragraph{Open questions and future work}

It remains an open question whether our $1/2$-approximation result can be improved while keeping the runtime polynomial in $m$. Furthermore, while we present an initial result on learning to deceive an opponent whose learning rule is unknown, it would be interesting to explore whether stronger results can be obtained in that setting---perhaps by leveraging knowledge about the opponent’s possible learning rules that is not exploited by our reduction to stochastic bandits. Finally, it remains open whether (approximately) optimizing against $\FTL$ is still NP-hard in the special case of zero-sum games in which all payoffs are integers bounded by $\littleo{T}$. 

\begin{credits}
\subsubsection{Acknowledgements} This work was supported in part by the National Science Foundation under grants CCF-2212968 and ECCS-2216899 and through an NSF CSGrad4US Fellowship under grant 2313998, by the Simons Foundation under the Simons Collaboration on the Theory of Algorithmic Fairness, and by the Office of Naval Research MURI Grant N000142412742. The
views expressed in this work do not necessarily reflect the position or the policy of the Government and no official endorsement should be inferred.

\end{credits}

%
% ---- Bibliography ----
%
% BibTeX users should specify bibliography style 'splncs04'.
% References will then be sorted and formatted in the correct style.
%
\bibliographystyle{splncs04}
\bibliography{ref}

@inproceedings{gan2019imitative,
  title={Imitative follower deception in stackelberg games},
  author={Gan, Jiarui and Xu, Haifeng and Guo, Qingyu and Tran-Thanh, Long and Rabinovich, Zinovi and Wooldridge, Michael},
  booktitle={EC},
  pages={639--657},
  year={2019}
}

@article{birmpas2020optimally,
  title={Optimally deceiving a learning leader in stackelberg games},
  author={Birmpas, Georgios and Gan, Jiarui and Hollender, Alexandros and Marmolejo, Francisco and Rajgopal, Ninad and Voudouris, Alexandros},
  journal={NeurIPS},
  volume={33},
  pages={20624--20635},
  year={2020}
}

@inproceedings{assosmaximizing,
  title={Maximizing utility in multi-agent environments by anticipating the behavior of other learners},
  author={Assos, Angelos and Dagan, Yuval and Daskalakis, Constantinos},
  booktitle={NeurIPS},
  year={2024}
}

@inproceedings{assos2025computational,
  title={Computational Intractability of Strategizing against Online Learners},
  author={Assos, Angelos and Dagan, Yuval and Rajaraman, Nived},
  booktitle={COLT},
  pages={169--199},
  year={2025},
  organization={PMLR}
}

@inproceedings{fortnow1994optimality,
  title={Optimality and domination in repeated games with bounded players},
  author={Fortnow, Lance and Whang, Duke},
  booktitle={STOC},
  pages={741--749},
  year={1994}
}

@inproceedings{freund1995efficient,
  title={Efficient algorithms for learning to play repeated games against computationally bounded adversaries},
  author={Freund, Yoav and Kearns, Michael and Mansour, Yishay and Ron, Dana and Rubinfeld, Ronitt and Schapire, Robert E},
  booktitle={FOCS},
  pages={332--341},
  year={1995},
  organization={IEEE}
}

@inproceedings{nguyen2020tackling,
  title={Tackling Imitative Attacker Deception in Repeated Bayesian Stackelberg Security Games.},
  author={Nguyen, Thanh Hong and Butler, Andrew and Xu, Haifeng},
  booktitle={ECAI},
  pages={187--194},
  year={2020}
}

@inproceedings{nguyen2019imitative,
  title={Imitative Attacker Deception in Stackelberg Security Games.},
  author={Nguyen, Thanh and Xu, Haifeng},
  booktitle={IJCAI},
  pages={528--534},
  year={2019}
}

@article{karp1978characterization,
  title={A Characterization of the Minimum Cycle Mean in a Digraph},
  author={Karp, Richard M},
  journal={Discrete Mathematics},
  volume={23},
  number={3},
  pages={309--311},
  year={1978},
  publisher={Elsevier}
}

@article{dasdan2002faster,
  title={Faster maximum and minimum mean cycle algorithms for system-performance analysis},
  author={Dasdan, Ali and Gupta, Rajesh K},
  journal={IEEE Transactions on Computer-aided Design of Integrated Circuits and Systems},
  volume={17},
  number={10},
  pages={889--899},
  year={2002}
}

@article{chen2023learning,
  title={Learning to Manipulate a Commitment Optimizer},
  author={Chen, Yurong and Deng, Xiaotie and Gan, Jiarui and Li, Yuhao},
  journal={arXiv preprint arXiv:2302.11829},
  year={2023}
}

@inproceedings{balcan2018diversified,
  title={Diversified Strategies for Mitigating Adversarial Attacks in Multiagent Systems.},
  author={Balcan, Maria-Florina and Blum, Avrim and Chen, Shang-Tse},
  booktitle={AAMAS},
  pages={407--415},
  year={2018}
}

@article{deng2019strategizing,
  title={Strategizing against No-regret Learners},
  author={Deng, Yuan and Schneider, Jon and Sivan, Balasubramanian},
  journal={NeurIPS},
  volume={32},
  year={2019}
}

@article{brown2023learning,
  title={Is Learning in Games Good for the Learners?},
  author={Brown, William and Schneider, Jon and Vodrahalli, Kiran},
  journal={NeurIPS},
  volume={36},
  pages={54228--54249},
  year={2023}
}

@inproceedings{braverman2018selling,
  title={Selling to a No-regret Buyer},
  author={Braverman, Mark and Mao, Jieming and Schneider, Jon and Weinberg, Matt},
  booktitle={EC},
  pages={523--538},
  year={2018}
}

@InProceedings{mansour2022strategizing,
  title = 	 {Strategizing against Learners in Bayesian Games},
  author =       {Mansour, Yishay and Mohri, Mehryar and Schneider, Jon and Sivan, Balasubramanian},
  booktitle = 	 {COLT},
  pages = 	 {5221--5252},
  year = 	 {2022},
  editor = 	 {Loh, Po-Ling and Raginsky, Maxim},
  volume = 	 {178},
  series = 	 {PMLR}
}

@inproceedings{aitchison2021learning,
  title={Learning to Deceive in Multi-agent Hidden Role Games},
  author={Aitchison, Matthew and Benke, Lyndon and Sweetser, Penny},
  booktitle={Deceptive AI (DeceptAI)},
  pages={55--75},
  year={2021},
  organization={Springer}
}

@inproceedings{carminati2024hidden,
  title={Hidden-Role Games: Equilibrium Concepts and Computation},
  author={Carminati, Luca and Zhang, Brian Hu and Farina, Gabriele and Gatti, Nicola and Sandholm, Tuomas},
  booktitle={EC},
  pages={106--107},
  year={2024}
}

@inbook{Kuhn+1951+97+104,
title = {A Simplified Two-Person Poker},
booktitle = {Contributions to the Theory of Games, Volume I},
author = {H. W. Kuhn},
editor = {Harold William Kuhn and Albert William Tucker},
publisher = {Princeton University Press},
address = {Princeton},
pages = {97--104},
doi = {10.1515/9781400881727-010},
isbn = {9781400881727},
year = {1951},
lastchecked = {2025-07-01}
}

@inproceedings{stochastic_bandits,
  title={Minimax Policies for Adversarial and Stochastic Bandits},
  author={Audibert, Jean-Yves and Bubeck, S{\'e}bastien},
  booktitle={COLT},
  pages={217--226},
  year={2009}
}

@article{chaturvedi2017note,
  title={A note on finding minimum mean cycle},
  author={Chaturvedi, Mmanu and McConnell, Ross M},
  journal={Information Processing Letters},
  volume={127},
  pages={21--22},
  year={2017},
  publisher={Elsevier}
}
%

% Appendix
\appendix
\crefalias{section}{appendix}
\crefalias{subsection}{subappendix}
\crefalias{subsubsection}{subsubappendix}
\section{Proofs from \Cref{section:dp-algorithms}} \label{section:appendix_dp}

\begin{proof}[Proof of \Cref{theorem:dp-ftl}]
    For $t = 1$, each count vector is a one-hot vector representing a single action $a_i$ being played once. The best possible score corresponding to $a_i$ being played in the first round is the expected payoff of $a_i$ against the action the opponent chooses given no history. This corresponds to the initialization in~\Cref{dp_ftl:init}.
    
    Assume that all entries corresponding to count vectors that sum to $1, \dots, t - 1$, have been correctly computed and hold the maximum expected score that can be achieved with those counts. We will show that each entry corresponding to a count vector summing to $t$ will also be computed correctly. At time $t$, the best overall expected payoff we can get corresponds to the expected payoff we get this round for playing a particular action $a_i$ which has count $c_i \geq 1$, plus the maximum expected payoff we could get in the previous $t - 1$ rounds from playing the other $t - 1$ actions represented by the count vector $(c_1, \dots, c_i - 1, \dots, c_{k_A})$ (the count-based learning rule $f$ is unaffected by the order of play, so we can simply choose the order with the highest expected payoff). By the inductive assumption, the maximum expected payoff for the previous $t - 1$ actions is already correctly computed, so we can simply take the maximum over the possible action choices for time $t$. This corresponds to the update rule in~\Cref{dp_ftl:recurrence}.

    The size of the DP table is $\bigo{T^{k_A}}$. Computing each entry takes $\bigo{{k_A}^2 k_B}$ time: there are $k_A$ actions of player $A$ to go over when taking the max, and for each we compute the expected payoff under $f$ in $O(k_Ak_B)$ time. Note that for computing $f$, we assume we keep track of the action counts as we go to avoid an $\bigo{T}$ factor of recomputing the counts for past rounds. Therefore the total time complexity is $\bigo{{k_A}^2 k_B T^{k_A}}$.
\end{proof}

\begin{proof}[Proof of \Cref{theorem:dp-ftl-m}]
    For $t = 1$, the value in $V[1, [a_i]]$ is exactly the expected utility that Player $A$ gets when playing $a_i$ at the very beginning of the game.

    Assume that the table $V$ has been correctly computed up to time $t - 1$ for every possible sequence of the last $m$ actions. At time $t$ fix the sequence of the last $m$ actions $[a_1, \dots, a_m]$. Then the expected cumulative payoff we can get within a $t$-round game that ends with $[a_1, \dots, a_m]$ is the maximum over all actions $a$ that can precede $[a_1, \dots, a_m]$ of the sum of the expected utility of playing $a_m$ against $f_m([a, a_1, \dots, a_{m - 1}])$ and the best expected score for a $t - 1$ round game that ends in $[a, a_1, \dots, a_{m - 1}]$, i.e. $V[t - 1, [a, a_1, \dots, a_{m - 1}]]$. By the induction hypothesis, those entries are computed correctly, so it must be that $V[t, [a_1, \dots, a_m]]$ is also computed correctly.

    The DP table has size $\bigo{T {k_A}^m}$: there are $T$ time steps and the number of distinct sequence vectors is $\bigo{{k_A}^m}$. Each entry takes $\bigo{{k_A}^2 k_B}$ time to compute: there are $k_A$ actions of player $A$ to go over when taking the max, and for each we computing the memory-limited count-based learning rule $f_m$ in $O(k_Ak_B)$ time. Similarly to the proof of~\Cref{theorem:dp-ftl}, for computing $f_m$, we assume we keep track of the action counts as we go to avoid an $\bigo{m}$ factor of recomputing the counts for past rounds. Therefore, the total time complexity is $\bigo{{k_A}^{m + 2} k_B T}$.
\end{proof}

\section{Proofs from \Cref{section:half_approx}}
\label{appendix:proofs-from-half-approx}

\begin{proof}[Proof of \Cref{thm:DP-S-2m}]

\smallskip
\noindent\textbf{Correctness.\ } 
We prove the correctness of the Dynamic Programming Algorithm by induction. 

\smallskip
\noindent\textbf{Base case.\ } 
For $t=m$, and any count vector $(c_1,\cdots,c_{k_A})$ that add up to $m$, the value in \[V[c_1,\cdots,c_{k_A}][0,\cdots,0]=0,\] since no actions are played in the second $m$ time steps yet. 

Assume that the table $V$ has been correctly computed up to time $t'<t$, for any count vector $(c_1,\cdots,c_{k_A},c'_1,\cdots,c'_{k_A})$ where $t'=m+\sum_{i=1}^{k_A}c'_i$ and $\sum_i c_i + \sum_i c'_i = m$. 
We reiterate the recurrence for the DP in the following. In order to compute the maximum expected payoff received by $A$ in the interval $(m,t]$, first we guess a pair of actions $(a_i, a_j)$ where $a_i$ was played at time step $t-m$ and $a_j$ is played at step $t$ by player $A$. For a fixed pair $(a_i, a_j)$, $A$'s maximum expected payoff at step $t$ is computed by adding her expected payoff in the current step $t$ where the opponent is using a memory-limited count-based learning rule $f_m$ with a count history of $(c_1+c'_1,\cdots, c_i+c'_i+1,\cdots, c_{k_A}+c'_{k_A})$, plus $A$'s expected cumulative payoff in step $t-1$ where the history of her actions is $(c_1,\cdots,c_i+1,\cdots,c_{k_A}, c'_1,\cdots, c'_j-1,\cdots,c'_{k_A})$. That is:
\begin{equation}
\begin{aligned}
&V[c_1,\cdots,c_{k_A}][c'_1,\cdots,c'_{k_A}]\\
=& \max_{\substack{(i, j) \in [k_A]^2 \\ c'_j \geq 1}}\Big[V[c_1,\cdots,c_i+1,\cdots,c_{k_A}][c'_1,\cdots,c'_j-1,\cdots,c'_{k_A}]\\
+&\bar{u}^A(a_j, f_m(c_1+c'_1,\cdots,c_i+c'_i+1,\cdots,c_j+c'_j-1, \cdots c_{k_A}+c'_{k_A}))\Big]
\end{aligned}
\end{equation}

Assuming the value for $V[c_1,\cdots,c_i+1,\cdots,c_{k_A}][c'_1,\cdots,c'_j-1,\cdots,c'_{k_A}]$ was computed correctly, the above recurrence gives the optimal value for $V[c_1,\cdots,c_{k_A}][c'_1,\cdots,c'_{k_A}]$.

\smallskip
\noindent\textbf{Time Complexity.\ } 
Next, we analyze the time complexity of the algorithm.
The number of entries in the DP table is equal to the number of tuples $(c_1,\cdots,c_{k_A},c'_1,\cdots,c'_{k_A})$ whose components add up to $m$. The number of such tuples is 
\[\binom{m+2k_A-1}{2k_A-1} = O((m+2k_A-1)^{2k_A-1})\]

Computing each table entry takes $O(k_B\cdot {k_A}^3)$, where $O(k_A^2)$ is for taking the max over all pairs of $(i,j)$ in the recurrence and $O(k_Ak_B)$ is for computing the output of the memory-limited count-based learning rule $f_m$ in each time step. Therefore, the total time complexity is $O(k_B {k_A}^3(m+2k_A)^{2k_A})$.
\end{proof}

\begin{proof}[Proof of \Cref{thm:half-approx}]
First, we argue that~\Cref{alg:half-approx} achieves a $1/2$-approximation guarantee. Let $\OPT_{e}, \OPT_{o}$ denote $A$'s maximum expected cumulative payoff in even-numbered and odd-numbered blocks, respectively. Let $\OPT$ represent $A$'s maximum expected cumulative payoff over the entire time horizon. By definition, we have:
\[\OPT_e + \OPT_o \geq \OPT\]

Since the algorithm guarantees expected payoff of $\max(\OPT_e, \OPT_o)$ it ensures that player $A$ receives at least half of their optimal expected cumulative payoff, thus achieving a $1/2$-approximation.

The dominant component in the time complexity of the algorithm is the running time of the Dynamic Programming algorithm for computing $S_{2m}$ and $S_{m+p}$, which is $O(k_B{k_A}^3(m+2k_A)^{2k_A})$ by~\Cref{thm:DP-S-2m}.
\end{proof}

\section{Proof from \Cref{section:ptas}}
\label{appendix:proofs-from-ptas}

\begin{proof}[Proof of \Cref{thm:ptas}]
First, consider the case where $\lceil m/\eps \rceil$ divides $T$. In this case, in each timeblock, $A$ is maximizing its expected cumulative payoff in the last $\lceil m/\eps \rceil - m$ time steps. Let $\OPT_{\lceil m/\eps \rceil}$ denote the maximum expected payoff that $A$ can achieve on a block of size $\lceil m/\eps \rceil$ in isolation, i.e. when $T=\lceil m/\eps \rceil$. Then in the first block, $A$'s expected payoff is at least $\frac{\lceil m/\eps \rceil-m}{\lceil m/\eps \rceil} \OPT_{\lceil m/\eps \rceil} \geq (1-\eps) \OPT_{\lceil m/\eps \rceil}$. Now, in any subsequent block, the actions that $B$ is taking in the first $m$ time steps depends on the actions taken by $A$ in the previous block. However, after that, the actions that $B$ is taking only depend on the actions taken by $A$ in the current block. Since we permute over all possible action sequences for $A$ and take the one that maximizes $A$'s expected payoff in the last $\lceil m/\eps \rceil-m$ time steps, we can guarantee that $A$ is achieving at least $\frac{\lceil m/\eps \rceil-m}{\lceil m/\eps \rceil} \OPT_{\lceil m/\eps \rceil}$ in any subsequent block as well. As a result, $A$ is achieving at least $\frac{\lceil m/\eps \rceil-m}{\lceil m/\eps \rceil} \OPT \geq (1-\eps) \OPT$ in total.

Next, consider the case where $\lceil m/\eps \rceil$ does not divide $T$. Let $T \bmod (\lceil m/\eps \rceil) = p$. Suppose the time horizon is divided into blocks $B_1,\cdots, B_u$ where $|B_u| = p$ and $|B_i| = \lceil m/\eps \rceil$ for $1\leq i< u$. Here again, for each timeblock $B_i$ for $1\leq i\leq u$, $A$ maximizes its payoff after the first $m$ time steps. Similar to the argument in the previous case, in $B_1,\cdots, B_{u-1}$, A is getting at least $\frac{\lceil m/\eps \rceil-m}{\lceil m/\eps \rceil} \OPT_{\lceil m/\eps \rceil} \geq (1-\eps) \OPT_{\lceil m/\eps \rceil}$ payoff. In the last timeblock $B_u$, in the first $\min(m,p)$ time steps, we are not guaranteed to get any payoff. After that, we are getting the maximum possible payoff.
Therefore, in total, $A$ is guaranteed to receive at least $(1 - \eps)(\OPT - mH)$ payoff.

Finally, we analyze the time complexity of the algorithm. The algorithm enumerates $O(k_A^{\lceil m/\eps \rceil})$ possible sequences, and selects the one that maximizes $A$'s payoff after the first $m$ time steps. For each sequence, computing $A$'s payoff over this interval takes $O(k_Ak_B (\lceil m/\eps \rceil))$ time since the time complexity needed for computing the memory-limited count-based learning rule $f_m$ over each time step is $O(k_Ak_B)$. This results in a total time complexity of $O(k_A^{\lceil m/\eps \rceil + 1} k_B \lceil m/\eps \rceil)$.
\end{proof}

\section{Proofs from \Cref{section:additive_approx}}
\label{section:additive-approx-proofs}

We begin with a warm-up result for memory $m = 1$, and then establish the result for general memory $m \geq 1$.

\newcommand{\FTLone}{\mathsf{FTL}_1}
\subsection{Cyclic Strategy against Memory-1 Count-Based Learner in the Limit}
\label{section:ftl-1-optimal-cycle}

Here we show a result (\Cref{theorem:ftl-1-optimal-cycle}) for the maximal mean weight action cycle---sequence of actions that when repeated yields the maximal average utility per action among all cycles---against $f_1$. The running time of this algorithm is competitive with both the dynamic programming result from~\Cref{section:dp-algorithms} (drops dependence on $T$) and the approximation algorithms from~\Cref{section:approximation-algorithms}. Moreover, the approximation factor of this maximal mean weight cycle strategy converges to 1 as $T \to \infty$. All cycles produced by the algorithms are simple. If the overall optimal cycle of actions were not simple, then we can decompose it into simple cycles which need to be at least as good as the optimal simple cycle. This reduces the problem to finding only the optimal simple cycle.

\begin{theorem}
\label{theorem:ftl-1-optimal-cycle}
    Suppose that all payoffs in $u^A$ and $u^B$ are bounded in $[0, H]$. For $T \gg k_A^2$, repeating the cycle from~\Cref{algorithm:ftl-1-optimal-cycle}, which runs in $\bigo{k_A^3 k_B}$ time, gives expected cumulative payoff of at least $\OPT - (3k_A + 1) H$ against $f_1$, where $\OPT$ is the optimal expected cumulative payoff for a sequence of length $T$. Therefore, in the limit when $T \to \infty$ repeating the optimal cycle yields optimal expected average per-action utility against $f_1$.
\end{theorem}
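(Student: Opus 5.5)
Against $f_1$, the opponent's mixed action at round $t>1$ depends only on $a_{t-1}$. I will encode the game as a weighted complete directed graph $G$ on vertex set $\calA$, including self-loops. The edge $(a,a')$ gets weight $w(a,a') = \bar{u}^A(a', f_1(e_a))$, where $e_a$ is the count vector with a single count on $a$. Any action sequence $a_1,\dots,a_T$ then has payoff $\bar{u}^A(a_1, f_1(0,\dots,0)) + \sum_{t=2}^{T} w(a_{t-1},a_t)$. This is the first-round term plus the weight of a walk of length $T-1$ in $G$.

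\Cref{algorithm:ftl-1-optimal-cycle} computes a maximum mean weight cycle $C^*$ of $G$ via Karp's algorithm. Building the $k_A^2$ edge weights costs $O(k_A^2\cdot k_Ak_B)$, and Karp's algorithm runs in $O(|V||E|)=O(k_A^3)$, so the total time is $O(k_A^3k_B)$. Let $\mu^*$ be the mean weight of $C^*$. We may take $C^*$ simple, so $|C^*|\le k_A$.

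\emph{Upper bound on $\OPT$.} Take an optimal sequence. It is a walk $W$ with $T-1$ edges plus the first-round term, which is at most $H$. Repeatedly strip simple cycles off $W$. This leaves a simple path with fewer than $k_A$ edges, worth at most $(k_A-1)H$, together with a multiset of simple cycles. Each stripped cycle has total weight at most $\mu^*$ times its length, and their lengths sum to at most $T-1$. Hence $\OPT \le H + (k_A-1)H + \mu^*(T-1) \le \mu^* T + k_AH$, using $\mu^*\ge 0$ since payoffs lie in $[0,H]$.

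\emph{Lower bound on the algorithm.} Play $C^*$ repeatedly, starting from any vertex on it, for all $T$ rounds. Round $1$ contributes at least $0$. The $T-1$ transitions traverse $\lfloor (T-1)/|C^*|\rfloor$ full copies of $C^*$ plus a partial cycle with fewer than $|C^*|$ edges. The full copies are worth at least $\mu^*(T-1-|C^*|)\ge \mu^* T - (k_A+1)H$, since $\mu^*\le H$. The partial cycle has nonnegative weight. Therefore $\ALG \ge \mu^* T - (k_A+1)H \ge \OPT - (2k_A+1)H$. This is within the claimed $(3k_A+1)H$.

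The slack relative to the claimed constant covers the case where the algorithm first walks a short path (at most $k_A$ edges) to reach the cycle from a prescribed start, losing up to $k_AH$ more. Dividing by $T$ and letting $T\to\infty$ gives average payoff $\mu^*$ for both the optimum and the cycle strategy. This proves the limiting statement; the assumption $T\gg k_A^2$ just makes the additive terms negligible relative to $T$.

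The main obstacle is the cycle-decomposition step. I need a careful argument that an arbitrary walk splits into simple cycles plus a short simple path without double counting edges. I would prove it by induction on walk length: whenever a vertex repeats, excise the closed subwalk between the two occurrences and recurse on it until it becomes simple. Every edge is then assigned exactly once, either to a simple cycle or to the residual path.
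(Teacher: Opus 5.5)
Your argument is correct, and it takes a different route from the paper's. The paper works by surgery on an optimal sequence:
\begin{itemize}
\item it picks an action $a$ that occurs at least three times;
\item it replaces the stretches between consecutive occurrences of $a$ by a best-mean simple cycle through $a$;
\item it pads the tail by repeating that cycle, costing at most $k_A$ rounds;
\item it uses pigeonhole to argue that such an $a$ has already appeared within the first $2k_A+1$ rounds, costing at most that prefix.
\end{itemize}
This gives a total loss of $(3k_A+1)H$.

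You instead sandwich both quantities around the single number $\mu^*$. Decomposing the optimal walk into simple cycles plus a simple path of fewer than $k_A$ edges gives $\OPT \le \mu^* T + k_A H$. A direct count for the strategy the algorithm actually plays gives at least $\mu^* T - (k_A+1)H$.

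This buys several things. The constant improves to $(2k_A+1)H$. The inequality holds for every $T$, so $T \gg k_A^2$ is needed only for the limiting statement. And every piece of the optimal walk is compared against the global maximum mean, which is exactly what Karp's algorithm returns.

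That last point matters. A closed walk between consecutive visits to $a$ can contain simple cycles that avoid $a$, so the best cycle \emph{through} $a$ need not dominate it. The paper's local substitution step glosses over this, whereas your decomposition sidesteps it entirely.

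The paper's transformation viewpoint says more about what optimal play looks like. Your sandwich argument is tighter and more self-contained. It also carries over to $G_m$: use the free choice of the first $m$ actions to start on the cycle, and charge the first $m$ rounds as an $mH$ prefix.
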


\subsubsection{Weighted Transition Graph for Memory-1 Count-Based Learner}
\label{subsection:transition-graph-ftl-1}

\begin{definition}[Transition graph for Memory-1 Count-Based Learner]
\label{definition:ftl-1-transition-graph}
    Consider a weighted directed graph $G(V_1, E_1)$, so that $V_1 = \calA$, $E_1 = \{(a_i, a_j) \mid a_i, a_j \in \calA,\ i, j \in [k_A]\}$, and each directed edge $(a_i, a_j)$ has weight $w(a_i, a_j) = \bar{u}^A(a_j, b_i)$, where $b_i = f_1(a_i)$. Then $G_1(\calA, \calB, \bar{u}^A, \bar{u}^B) = G(V_1, E_1)$ is the \emph{transition graph} for the memory-1 count-based learning rule $f_1$ defined by the action spaces $\calA, \calB$ and the expected utility functions $\bar{u}^A$ and $\bar{u}^B$ for Player A and Player B, respectively. \Cref{algorithm:ftl-1-transition-graph} shows how to construct the transition graph given the action sets $\calA, \calB$ and the expected utility functions $\bar{u}^A$ and $\bar{u}^B$.
\end{definition}

\begin{algorithm}[htbp]
\caption{\textsc{TransitionGraphForFOne}: Building the transition graph $G_1$ for memory-1 count-based learner $f_1$}
\label{algorithm:ftl-1-transition-graph}

    \raggedright
    \setcounter{AlgoLine}{0}
    
    \KwIn{Action sets $\mathcal{A}, \mathcal{B}$, utility functions $\bar{u}^A, \bar{u}^B$, memory-1 count-based learner $f_1$}
    \KwOut{The transition graph $G_1$ for $f_1$}
    
    $V_1 \leftarrow \mathcal{A}$\;
    
    $E_1 \leftarrow \emptyset$\;
    
    \For{$a_i \in V_1$}{
        $b_i \leftarrow f_1(a_i)$\;
        
        \For{$a_j \in V_1$}{
            $w \leftarrow \bar{u}^A(a_j, b_i)$\;
            
            Add edge $(a_i, a_j)$ with weight $w$ to $E_1$\;
        }
    }
    
    $G_1 \leftarrow G(V_1, E_1)$ \tcp{Construct graph with vertices $V_1$ and edges $E_1$}
    \Return{$G_1$}\;
\end{algorithm}

\begin{proposition}[Reduction to Maximal Weight Walk in $G_1$]
\label{proposition:ftl-1-reduction}
    Finding a sequence of actions that maximizes the expected cumulative payoff for the Optimizer (Player A) over a time horizon $T$ against a memory-1 count-based learner $f_1$ (Player B) is equivalent to finding the maximum weight walk\footnote{Not necessarily simple, so vertices are allowed to repeat.} of length $T - 1$ in the corresponding transition graph $G_1 \coloneq G_1(\calA, \calB, \bar{u}^A, \bar{u}^B)$.
\end{proposition}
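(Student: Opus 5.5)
The plan is to build an explicit bijection between action sequences of length $T$ and walks with $T-1$ edges (that is, $T$ vertices) in $G_1$, and to show that it preserves payoff up to a term that depends only on the first action. Given a sequence $a_{(1)},\dots,a_{(T)}$ of player $A$'s actions, I associate the walk $a_{(1)}\to a_{(2)}\to\cdots\to a_{(T)}$. Since $G_1$ is complete with self-loops ($E_1$ contains every ordered pair, including $(a_i,a_i)$), every such sequence gives a valid walk. Conversely, reading off the visited vertices of a walk gives back a sequence, so the map is a bijection.

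Next I would compute the payoff. Against $f_1$, the opponent's mixed action at round $t\ge 2$ depends only on the count vector of the previous single action, so it is $f_1(a_{(t-1)})$. Player $A$'s expected payoff at round $t$ is therefore $\bar{u}^A(a_{(t)}, f_1(a_{(t-1)})) = w(a_{(t-1)},a_{(t)})$ by \Cref{definition:ftl-1-transition-graph}. At round $1$ the opponent plays $f_1$ on the empty history, $f_1(0,\dots,0)$, and the payoff is $\bar{u}^A(a_{(1)}, f_1(0,\dots,0))$. Summing over rounds gives
\[
\sum_{t=1}^T \bar{u}^A(a_{(t)},b_t)=\bar{u}^A(a_{(1)},f_1(0,\dots,0))+\sum_{t=2}^{T} w(a_{(t-1)},a_{(t)}).
\]
The second term is exactly the weight of the corresponding walk.

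The main obstacle is the first-round term, which is not an edge weight, so strictly speaking the equivalence holds only after handling it. I would deal with it in one of two ways. The first is to add a source vertex $s$ representing the empty history, with edges $(s,a_j)$ of weight $\bar{u}^A(a_j,f_1(0,\dots,0))$, and consider walks of length $T$ from $s$. The second, which matches the statement as written, is to observe that the first-round term depends only on the start vertex. One then maximizes, over start vertices $a$, the quantity $\bar{u}^A(a,f_1(\emptyset))$ plus the maximum-weight walk of length $T-1$ starting at $a$. Since this term is bounded by $H$, the two problems agree up to this per-start offset.

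In either formulation, the payoff-preserving bijection shows that an optimal sequence maps to an optimal walk and vice versa, which gives the equivalence. I would state the equivalence with the first-round term absorbed in one of these two ways, since it is only exact once that term is accounted for.
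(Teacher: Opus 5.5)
Your proposal is correct and follows the paper's argument. Both identify an action sequence with the walk $a_{(1)}\to\cdots\to a_{(T)}$ in $G_1$, and both observe that for $t\ge 2$ the round-$t$ payoff equals the edge weight $w(a_{(t-1)},a_{(t)})$, because $f_1$ sees only the previous action. The one difference is the first-round payoff: the paper sets it aside as ``fixed by assumption,'' while you handle it explicitly, either with a source vertex or by fixing the start vertex and maximizing over the $k_A$ choices, which makes the equivalence exact.
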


\begin{proof}
    The Optimizer's action at time $t - 1$ is the only piece of information required to determine the action of $f_1$ at time $t$. Therefore, we can model the problem using the transition graph $G_1$ (see~\Cref{definition:ftl-1-transition-graph}). Recall that the utility gained by the Optimizer at time $t$ is $\bar{u}^A(a_t, b_t)$. This is exactly the weight of the edge $(a_{t - 1}, a_t)$, namely $w(a_{t - 1}, a_t) = \bar{u}^A(a_{t}, b_t)$, where $b_t = f_1(a_{t - 1})$.
    
    Observe that an action sequence $(a_1, a_2, \dots, a_T)$ for the Optimizer corresponds to a walk of length $T - 1$ in $G_1$. Then the total utility accumulated by the Optimizer is the sum of the weights of the edges in this walk, plus the utility from the initial action $a_0$, which is fixed by assumption. Therefore, to find the action sequence $(a_1, a_2, \dots, a_T)$ that maximizes the expected utility it is enough to find the walk of length $T - 1$ in $G_1$ with maximal total weight.

    Notice that from above the reverse is also true: the action sequence of length~$T$ of maximal utility yields a walk of length $T - 1$ of maximal total weight.
\end{proof}

\subsubsection{Optimal Cycle Against a Memory-1 Count-Based Learner}
\label{subsection:ftl-1-optimal-cycle}

\begin{algorithm}[htbp]
\caption{Optimal Cycle Against Memory-1 Count-Based Learning Rule $f_1$}
\label{algorithm:ftl-1-optimal-cycle}
    
    \raggedright
    \setcounter{AlgoLine}{0}
    
    \KwIn{Action sets $\mathcal{A}, \mathcal{B}$, utility functions $\bar{u}^A, \bar{u}^B$, memory-1 count-based learning rule $f_1$}
    \KwOut{Cycle $C^*$ representing the optimal cyclic strategy sequence}
    
    $G_1 \leftarrow \textsc{TransitionGraphForFOne}(\mathcal{A}, \mathcal{B}, \bar{u}^A, \bar{u}^B)$ \tcp{\Cref{algorithm:ftl-1-transition-graph}}
    
    $C^* \leftarrow \textsc{KarpMaxMeanCycle}(G_1)$ \tcp{Find maximum mean weight cycle, \Cref{section:karp-max-mean-weight-cycle}}
    
    \Return{$C^*$}\;
\end{algorithm}

\begin{proposition}
\label{proposition:ftl-1-optimal-running-time}
    \Cref{algorithm:ftl-1-optimal-cycle} finds the cycle of actions yielding the maximal average utility per time step in time $\bigo{k_A^3 k_B}$, where $k_A = \abs{\mathcal{A}}$ and $k_B =~\abs{\mathcal{B}}$.
\end{proposition}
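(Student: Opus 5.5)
The proof has two parts: bound the running time of the two steps of \Cref{algorithm:ftl-1-optimal-cycle}, and argue that the cycle Karp's method returns is optimal among all repeating action sequences.

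\textbf{Running time.} First I bound the cost of \Cref{algorithm:ftl-1-transition-graph}.
\begin{itemize}
\item The outer loop runs over the $k_A$ vertices. For each $a_i$ it evaluates $f_1(a_i)$ once; by the running-time remark in the model section this takes $\bigo{k_A k_B}$ time. Over all vertices this costs $\bigo{k_A^2 k_B}$.
\item The inner loop produces $k_A^2$ edges. Each weight $\bar{u}^A(a_j, b_i)$ is an expectation over at most $k_B$ opponent actions (\Cref{def:expected-payoff}), so it takes $\bigo{k_B}$ time. This adds another $\bigo{k_A^2 k_B}$.
\end{itemize}
So $G_1$ is built in $\bigo{k_A^2 k_B}$ time, and it has $n = k_A$ vertices and $|E_1| = k_A^2$ edges.

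Next I invoke Karp's maximum mean cycle algorithm \cite{karp1978characterization,chaturvedi2017note}, applied to the negated weights or with max in place of min. It runs in $\bigo{n|E|}$ time, which here is $\bigo{k_A \cdot k_A^2} = \bigo{k_A^3}$. A cycle attaining the optimum is recovered from the standard predecessor pointers within the same bound. The total is $\bigo{k_A^3 + k_A^2 k_B}$, which is at most $\bigo{k_A^3 k_B}$.

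\textbf{Correctness.} By \Cref{proposition:ftl-1-reduction}, repeating an action cycle $(a_{i_1},\dots,a_{i_\ell})$ against $f_1$ corresponds to traversing a closed walk in $G_1$. In steady state, the utility earned per round is exactly the mean edge weight of that closed walk, because each round's payoff is the weight of the edge $(a_{t-1},a_t)$.

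\begin{itemize}
\item Any closed walk decomposes into simple cycles.
\item The mean weight of the closed walk is a convex combination of the means of those simple cycles, weighted by their lengths.
\item Hence some simple cycle has mean at least that of the walk.
\end{itemize}
It follows that the maximum mean over all repeating action sequences equals the maximum mean over simple cycles, and this is exactly what Karp's algorithm computes.

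\textbf{Main obstacle.} The main point needing care is this convex-combination step for non-simple cycles, and making sure Karp's characterization applies in the max-mean form. Converting via $w \mapsto -w$ handles the latter, and self-loops are included since $E_1$ contains the pairs $(a_i,a_i)$.
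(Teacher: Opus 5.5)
Your proposal is correct and follows the same route as the paper. You bound the cost of building $G_1$, invoke Karp's $\bigo{\abs{V}\cdot\abs{E}} = \bigo{k_A^3}$ algorithm, and get correctness from \Cref{proposition:ftl-1-reduction} together with decomposing non-simple cycles into simple ones. The only difference is that you charge the evaluation of $f_1$ once per vertex, as \Cref{algorithm:ftl-1-transition-graph} actually does, whereas the paper charges it once per edge; this gives you the sharper $\bigo{k_A^3 + k_A^2 k_B}$, which still lies within the stated $\bigo{k_A^3 k_B}$.
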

\begin{proof}
    The correctness of~\Cref{algorithm:ftl-1-optimal-cycle} follows directly from~\Cref{proposition:ftl-1-reduction} and the correctness of Karp's Maximal Mean Weight Cycle algorithm. Next, we calculate the time complexity for each step separately.
    \begin{itemize}[noitemsep, topsep=0pt, parsep=0pt, partopsep=0pt]
        \item \textbf{Constructing the transition graph:}\ The graph $G_1$ has $k_A$ vertices. To construct the edges, the outer loop runs $k_A$ times and the inner loop also runs $k_A$ times, creating $k_A^2$ edges in total. Computing the opponent's response with the count-based learning rule $f_1$ takes $\bigo{k_A k_B}$ time and is performed once per edge. Therefore, the total time for constructing the transition graph is $\bigo{k_A^3 k_B}$.
        \item \textbf{Finding the Maximum Mean Weight Cycle:}\ Karp's algorithm~\cite{karp1978characterization}, with its correction~\cite{chaturvedi2017note}, on a graph $G(V, E)$ runs in $\bigo{\abs{V} \cdot \abs{E}}$ time. Therefore, the running time for $G_1$ is $\bigo{k_A \cdot k_A^2} = \bigo{k_A^3}$.
    \end{itemize}
\end{proof}

\subsubsection{Proof of \Cref{theorem:ftl-1-optimal-cycle}}

\begin{proof}
    All cycles produced by the algorithm are simple and have size at most $k_A$. Notice that if the overall optimal cycle of actions was not simple, then we can decompose it into simple cycles which need to be at least as good as the optimal simple cycle. Therefore, it is enough to consider only the simple cycles produced by Karp's algorithm.

    Consider the optimal sequence of actions for time horizon $T$. Next, consider any action $a$ that has been repeated at least three times. Substituting the segments between consecutive executions of $a$ with the maximum-mean-weight (simple) cycle that contains $a$ would yield a better per-action utility compared to the per-action utility between consecutive executions of $a$ in the original sequence. However, this can decrease the length of the sequence, in this case we can repeat the cycle that contains $a$ enough times to cover a sequence with length in $[T - k_A, T]$ (since any simple cycle has size at most $k_A$). Therefore, we would lose the per-action expected payoff from at most $k_A$ actions at the end.

    Now observe that any action that has been repeated at least three times must be first used within the first $2k_A + 1$ actions (using each of the other actions twice and that one action thrice). Therefore, because the above procedure is not optimizing for the beginning of the action sequence it would lose at most the per-action expected payoff of at most $2k_A + 1$ actions at the beginning of the sequence. Combining the two facts, the expected cumulative payoff would be at least $\OPT - (3k_A + 1) H$, where $\OPT$ is the optimal expected cumulative payoff for a sequence of length $T$.
\end{proof}

\subsection{Cyclic Strategy against Memory-Limited Count-Based Learner in the Limit}
\label{section:ftl-m-optimal-cycle}

We can extend the results from \Cref{section:ftl-1-optimal-cycle} to the memory-limited count-based learning rule $f_m$ with any finite memory $m \geq 1$. The running time roughly matches that of the Dynamic Programming algorithms from~\Cref{section:dp-algorithms} up to dependence on $T$.

\begin{theorem}
\label{theorem:ftl-m-optimal-cycle}
    Suppose that all payoffs in $u^A$ and $u^B$ are bounded in $[0, H]$. For $T \gg k_A^{2m}$, repeating the cycle from~\Cref{algorithm:ftl-m-optimal}, which runs in $\bigo{k_A^{2m + 1} + k_A^{m} (m + k_A k_B)}$ time, gives an expected cumulative payoff of at least $\OPT - (3k_A^m + m) H$ against $f_m$, where $\OPT$ is the optimal expected cumulative payoff for a sequence of length~$T$. Therefore, in the limit when $T \to \infty$, repeating the optimal cycle yields optimal expected average per-action utility against $f_m$.
\end{theorem}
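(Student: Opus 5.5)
We follow the memory-1 argument of \Cref{theorem:ftl-1-optimal-cycle}, lifting it to a transition graph whose states are windows of the last $m$ actions. Let $G_m$ have vertex set $\calA^m$, ordered from least to most recent. For each $s=(s_1,\dots,s_m)$ and $a\in\calA$, add an edge from $s$ to $s'=(s_2,\dots,s_m,a)$ with weight $w(s,s')=\bar{u}^A(a, f_m(s))$. Here $f_m(s)$ depends only on the counts in $s$. The graph has $k_A^m$ vertices and $k_A^{m+1}$ edges.

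The first step is the analogue of \Cref{proposition:ftl-1-reduction}. Once $t>m$, the opponent's mixed action at round $t$ is determined by the window $(a_{t-m},\dots,a_{t-1})$. Hence any action sequence, after its first $m$ actions, is a walk in $G_m$ starting from the state $(a_1,\dots,a_m)$. The payoff from rounds $m+1,\dots,T$ equals the walk weight, and the first $m$ rounds contribute at most $mH$. \Cref{algorithm:ftl-m-optimal} builds $G_m$ and runs Karp's maximum mean cycle algorithm on it. A cycle in $G_m$ starting at state $s$ corresponds to an action sequence. We play the actions of $s$ first, which costs at most $mH$, and then repeat the cycle's edge labels. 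The mean-cycle value $\mu^*$ is then earned on each remaining step, except for a truncated final copy of the cycle.

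For the running time, building $G_m$ costs $\bigo{k_A^m(m+k_Ak_B)}$. Enumerating the states and shifting windows costs $\bigo{m}$ per state. Evaluating $f_m$ once per state costs $O(k_Ak_B)$, and the $k_A$ outgoing edge weights are then cheap. Karp's algorithm with the correction of~\cite{chaturvedi2017note} costs $\bigo{\abs{V}\abs{E}}=\bigo{k_A^{2m+1}}$.

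The approximation bound follows the earlier cycle-replacement argument, now with $k_A^m$ states. Take an optimal sequence and its walk of length $T-m$ in $G_m$. By pigeonhole, some state that is visited at least three times first appears within the first $2k_A^m+1$ steps of the walk. The rest of the walk decomposes into closed walks, each of which splits into simple cycles. Each simple cycle has mean at most $\mu^*$, so after the prefix of length at most $2k_A^m+1$ and a tail of length at most $k_A^m$, the optimal walk earns at most $\mu^*$ per step. Our cyclic strategy earns $\mu^*$ per step, except in the first $m$ setup rounds and the last incomplete cycle of length at most $k_A^m$.

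The main obstacle is the accounting. To reach the stated $(3k_A^m+m)H$, we will charge $m$ rounds for the initial window, $2k_A^m$ (absorbing the $+1$) for the optimum's prefix, and $k_A^m$ for the truncated tail. Non-negativity of payoffs lets us drop lost terms freely. A cleaner route to the cycle decomposition is the standard bound: any walk of length $L$ in a graph with $n$ vertices has weight at most $L\mu^*+nH$, since removing cycles leaves a simple path of fewer than $n$ edges. This would give the bound directly and confirm that the three charges fit within $(3k_A^m+m)H$. The condition $T\gg k_A^{2m}$ only ensures the loss is lower order, and dividing by $T$ yields the limiting statement.
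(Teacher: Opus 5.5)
Your proposal is correct. The construction of $G_m$ and the running-time analysis match the paper's; you differ in how you certify the additive loss.

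The paper does surgery on the optimal walk. It takes a state $s$ visited at least three times, which by pigeonhole happens within the first $2k_A^m+1$ edges. It replaces the segments between consecutive visits of $s$ by the best simple cycle through $s$. It then charges $2k_A^m+m$ rounds for the unoptimized prefix and $k_A^m$ for the truncated end.

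Your decisive step is instead the global bound: a walk of length $L$ on $n$ vertices has weight at most $L\mu^*+nH$. You peel off simple cycles, each of mean at most $\mu^*$, until a simple path remains. With $\mu^*\ge 0$ this gives $\OPT\le mH+(T-m)\mu^*+k_A^mH$. Your explicit strategy (play the $m$ actions of a state on the cycle, then loop) earns at least $(T-m)\mu^*-k_A^mH$.

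Your route buys a complete argument with the smaller loss $(2k_A^m+m)H$. It also compares the optimum directly with the global maximum-mean cycle that Karp's algorithm actually returns.

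The surgery argument is more intuitive and mirrors the memory-1 proof, but as written it needs your bound to be made rigorous, for two reasons. First, a closed walk through $s$ can contain a better cycle that avoids $s$, so the comparison must be against the global $\mu^*$ rather than the best cycle through $s$. Second, the part of the optimal walk after the last visit to $s$ is not accounted for.

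Your own pigeonhole paragraph shares that second weakness. The remainder after the last visit to the thrice-visited state is neither a closed walk nor necessarily of length at most $k_A^m$. So your write-up should rest on the decomposition bound rather than on that sketch.
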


\begin{algorithm}[tbp]
\caption{\textsc{TransitionGraphForFM}: Building the transition graph $G_m$ for memory-limited count-based learning rule $f_m$}
\label{algorithm:ftl-m-transition-graph}

    \raggedright
    \setcounter{AlgoLine}{0}
    
    \KwIn{Action sets $\mathcal{A}, \mathcal{B}$, utility functions $\bar{u}^A, \bar{u}^B$, memory length $m$, memory-limited count-based learning rule $f_m$}
    \KwOut{The transition graph $G_m$ for memory-limited count-based learning rule $f_m$}
    
    $V_m \leftarrow \mathcal{A}^m$ \tcp{Set of all action sequences of length $m$}
    $E_m \leftarrow \emptyset$\;
    
    \For{$s = (a_1, \dots, a_m) \in V_m$}{
        $b_s \leftarrow f_m(s)$\;
        
        \For{$a_{\text{new}} \in \mathcal{A}$}{
            $s' \leftarrow (a_2, \dots, a_m, a_{\text{new}})$\;
            
            $w \leftarrow \bar{u}^A(a_{\text{new}}, b_s)$\;
            
            Add edge $(s, s')$ with weight $w$ to $E_m$\;
        }
    }
    
    $G_m \leftarrow G(V_m, E_m)$ \tcp{Construct graph with vertices $V_m$ and edges $E_m$}
    \Return{$G_m$}\;
\end{algorithm}

\subsubsection{Weighted Transition Graph for Memory-Limited Count-Based Learner}
\label{subsection:transition-graph-ftl-m}

\begin{definition}[Transition graph for Memory-Limited Count-Based Learner]
\label{definition:ftl-m-transition-graph}
    Let $G_m(\bar{u}^A, \bar{u}^B) = G(V_m, E_m)$, where $V_m = \calA^m$ and $E_m = \{(s, s') \mid s, s' \in V_m, s[1:] = s'[:-1]\}$,\footnote{In Python notation $s[1:]$, where indexing starts from zero, denotes the last $m - 1$ coordinates of $s$, and $s'[:-1]$ denotes the first $m - 1$ coordinates of $s'$.} be a weighted directed graph. Notice that each vertex is a sequence of $m$ actions, and a vertex connects two action sequences if the end of the first one $s[1:]$ matches the beginning of the other. Each directed edge $(s, s')$ has weight $w(s, s') = \bar{u}^A(s'[m], b_s)$, where $b_s = f_m(s)$. Then $G_m(\calA, \calB, \bar{u}^A, \bar{u}^B) = G(V_m, E_m)$ is the \emph{transition graph} for $f_m$ defined by the action spaces $\calA, \calB$ and the utility functions $\bar{u}^A$ and $\bar{u}^B$ for Player A and Player B, respectively. Constructing the transition graph for memory-limited count-based learning $f_m$ (see \Cref{algorithm:ftl-m-transition-graph}) follows an analogous procedure to the one for $f_1$ from \Cref{algorithm:ftl-1-transition-graph}.
\end{definition}

\begin{proposition}[Reduction to Maximal Weight Walk]
\label{proposition:ftl-m-reduction}
    Fix the first $m$ actions played by $A$, then finding a sequence of actions that maximizes the expected cumulative payoff for the Optimizer (Player A) over a time horizon $T + m$ against a memory-limited count-based learning rule $f_m$ (Player B) is equivalent to finding the maximum weight walk\footnote{Not necessarily simple, so vertices are allowed to repeat.} of length $T - 1$ in the corresponding transition graph $G_m \coloneq G_m(\calA, \calB, \bar{u}^A, \bar{u}^B)$.
\end{proposition}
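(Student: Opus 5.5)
The plan is to mirror the proof of \Cref{proposition:ftl-1-reduction}, with states now being windows of the last $m$ actions instead of single actions. Once the first $m$ actions $s_0 = (a_1, \dots, a_m)$ are fixed, I will set up a bijection between continuations $(a_{m+1}, \dots, a_{m+T})$ and walks in $G_m$ that start at $s_0$ and have a fixed number of edges. I will then show that this bijection preserves value, up to an additive constant that does not depend on the choice.

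\textbf{Step 1 (state sufficiency).} For every round $t > m$, the window $s_{t-1} = (a_{t-m}, \dots, a_{t-1})$ consists of exactly the last $m$ actions. By \Cref{definition:count-based}, the counts seen by $f_m$ at round $t$ are the counts of this window. Hence $b_t = f_m(s_{t-1})$, and $A$'s expected payoff at round $t$ is $\bar{u}^A(a_t, f_m(s_{t-1}))$.

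\textbf{Step 2 (correspondence).} Choosing $a_t$ moves the state to $s_t = (a_{t-m+1}, \dots, a_t)$. This satisfies $s_{t-1}[1:] = s_t[:-1]$, so $(s_{t-1}, s_t) \in E_m$. By \Cref{definition:ftl-m-transition-graph}, the weight of this edge is $\bar{u}^A(s_t[m], f_m(s_{t-1})) = \bar{u}^A(a_t, b_t)$, which is exactly the round-$t$ payoff.

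Conversely, every out-edge of $s_{t-1}$ is determined by a choice of new last coordinate $a_{\text{new}} \in \calA$. So walks from $s_0$ correspond bijectively to continuation sequences.

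\textbf{Step 3 (total value).} The total payoff equals the payoff of the first $m$ rounds plus the sum of the walk's edge weights. The first-$m$-round payoff is fixed once $s_0$ is fixed, because $f_m$'s behavior on partial histories depends only on those actions. Maximizing the total payoff is therefore the same as maximizing the walk weight. The reverse direction, that an optimal walk yields an optimal sequence, follows from the same bijection.

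The main obstacle is the index bookkeeping. With horizon $T+m$ and the first $m$ actions fixed, there are $T$ remaining actions, which gives $T$ edges. This conflicts with the stated walk length $T-1$. I would reconcile the two by adopting the convention of \Cref{proposition:ftl-1-reduction}, where a walk's length counts its vertices after the start, or where round $m+1$ is treated as part of the fixed prefix. Whichever convention is used, I would state it explicitly, so that the off-by-one affects only a constant term.

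A second subtlety is that the walk's starting vertex is forced to be $s_0$. This is the reason the statement fixes the first $m$ actions. If those actions were left free, one would instead maximize over all start vertices together with the prefix payoff.
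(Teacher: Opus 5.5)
Your proposal is correct and is essentially the paper's argument. The paper's proof just says it is analogous to the $f_1$ case once the first $m$ actions are fixed: the length-$m$ window determines $f_m$'s response, each edge weight equals that round's payoff, continuations correspond bijectively to walks from the fixed start vertex, and the prefix payoff is constant. Your off-by-one observation is also right. From $s_0$, a horizon of $T+m$ gives exactly $T$ edges, and counting vertices after the start also gives $T$, so that convention does not recover $T-1$. The cleanest fix is to state the walk as having $T$ edges starting at $s_0$.
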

\begin{proof}
    The proof is analogous to the proof of~\Cref{proposition:ftl-1-reduction} after we fix the first $m$ actions.
\end{proof}

\subsubsection{Optimal Cycle against a Memory-Limited Count-Based Learner}
\label{subsection:ftl-m-optimal-cycle}

\begin{algorithm}[htbp]
\caption{Optimal Cycle against Memory-Limited Count-Based Learner}
\label{algorithm:ftl-m-optimal}

    \raggedright
    \setcounter{AlgoLine}{0}
    
    \KwIn{Action sets $\mathcal{A}, \mathcal{B}$; utility functions $\bar{u}^A, \bar{u}^B$; memory length $m$}
    \KwOut{Cycle $C^*$ representing the optimal long-term strategy}
    
    $G_m \leftarrow \textsc{TransitionGraphForFM}(\mathcal{A}, \mathcal{B}, \bar{u}^A, \bar{u}^B, m)$ \tcp{\Cref{algorithm:ftl-m-transition-graph}}
    
    $C^* \leftarrow \textsc{KarpMaxMeanCycle}(G_m)$ \tcp{Find maximum mean weight cycle in $G_m$}
    
    \Return{$C^*$}\;
\end{algorithm}

\begin{proposition}
\label{proposition:ftl-m-optimal-running-time}
    \Cref{algorithm:ftl-m-optimal} finds the cycle of actions that yields the maximal expected average utility per time step in time $O(k_A^{2m + 1} + k_A^{m} (m + k_A k_B))$.
\end{proposition}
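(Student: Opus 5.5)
The argument mirrors the proof of \Cref{proposition:ftl-1-optimal-running-time}: correctness comes from the reduction, and the running time splits into the two phases of \Cref{algorithm:ftl-m-optimal}, building $G_m$ and running Karp's algorithm.

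For correctness, I would invoke \Cref{proposition:ftl-m-reduction}. Once the first $m$ actions are fixed, every action sequence of player $A$ corresponds to a walk in $G_m$. The weight of each edge $(s,s')$ is exactly $A$'s expected payoff $\bar{u}^A(s'[m], f_m(s))$ at the step where $s'[m]$ is played. Hence a repeated sequence of actions corresponds to a cycle in $G_m$, and its average per-step utility equals the cycle's mean edge weight. A non-simple repeating sequence decomposes into simple cycles, and its mean is a convex combination of their means, so it is no better than the best simple cycle. Karp's algorithm, with the correction of \cite{chaturvedi2017note}, returns a maximum mean weight cycle, and this cycle is therefore the optimal repeating action sequence.

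For the running time of the graph construction, I would count vertices and edges. $G_m$ has $|V_m| = k_A^m$ vertices, and each vertex has exactly $k_A$ out-edges (one per $a_{\text{new}}$), so $|E_m| = k_A^{m+1}$. Following \Cref{algorithm:ftl-m-transition-graph}, for each vertex $s$ we do three things:
\begin{itemize}
\item compute the count vector of $s$ in $O(m)$ time;
\item evaluate $f_m(s)$ once in $O(k_A k_B)$ time, by the paper's standing assumption on count-based rules;
\item for each of the $k_A$ successors, form $s'$ and compute $\bar{u}^A(a_{\text{new}}, b_s)$ in $O(k_B)$ time.
\end{itemize}
Per vertex this costs $O(m + k_A k_B)$. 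Summed over all vertices, construction takes $O(k_A^m(m + k_A k_B))$.

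For Karp's algorithm, the running time is $O(|V||E|) = O(k_A^m \cdot k_A^{m+1}) = O(k_A^{2m+1})$. Adding the two phases gives the claimed bound $O(k_A^{2m+1} + k_A^m(m + k_A k_B))$.

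The main obstacle is the construction cost, which must be kept within the stated bound. Naively copying the length-$m$ tuple $s'$ for each of the $k_A^{m+1}$ edges would cost $O(m k_A^{m+1})$, which is not covered by the bound. To avoid this, I would represent each vertex as an index in base $k_A$ and compute a successor's index arithmetically: $\mathrm{idx}(s') = (\mathrm{idx}(s) \bmod k_A^{m-1})\cdot k_A + \mathrm{new}$, which takes $O(1)$ time. With this, the only $O(m)$ work per vertex is computing its counts once. The $\bar{u}^A$ evaluation costs $O(k_B)$ per edge, totalling $O(k_A^{m+1}k_B)$, which is already within $k_A^m \cdot k_A k_B$.
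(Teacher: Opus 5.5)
Your proposal is correct and matches the paper's proof: both count $|V_m| = k_A^m$ and $|E_m| = k_A^{m+1}$, charge $O(m + k_A k_B)$ per source vertex for the count vector plus one evaluation of $f_m$, and bound Karp's algorithm by $O(|V_m|\,|E_m|) = O(k_A^{2m+1})$; your correctness paragraph via the reduction just spells out what the paper states for the $m=1$ case. One small correction: the base-$k_A$ indexing trick is harmless but unnecessary, and your claim that naive copying is not covered by the bound is wrong, since $m \le k_A^m$ whenever $k_A \ge 2$ means the $O(m k_A^{m+1})$ copying cost is already absorbed by the $k_A^{2m+1}$ term (and by the $k_A^m m$ term when $k_A = 1$).
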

\begin{proof}
    We calculate the time complexity for each procedure.
    \begin{itemize}[noitemsep, topsep=0pt, parsep=0pt, partopsep=0pt]
        \item \textbf{Constructing the transition graph:} The graph $G_m$ has $|V_m| = k_A^m$ vertices. From each vertex, there are $k_A$ outgoing edges, for a total of $|E_m| = k_A^m \cdot k_A = k_A^{m + 1}$ edges. The opponent's response must be calculated for each of the $k_A^m$ source vertices, and takes $\bigo{m}$ time to compute the count vector and $\bigo{k_A k_B}$ to compute the count-based learning rule $f_m$.
        Therefore, the running time for constructing the transition graph is dominated by $O(k_A^{m} (m + k_A k_B))$.
       
        \item \textbf{Finding the Maximum Mean Weight Cycle:} Karp's algorithm~\cite{karp1978characterization}, with its correction~\cite{chaturvedi2017note}, on $G_m$ runs in $O(\abs{V_m} \cdot \abs{E_m}) = O(k_A^m \cdot k_A^{m + 1}) = O(k_A^{2m + 1})$ time.
    \end{itemize}
\end{proof}

\subsubsection{Proof of \Cref{theorem:ftl-m-optimal-cycle}}

\begin{proof}
    The proof follows analogously to that of \Cref{theorem:ftl-1-optimal-cycle}. By \Cref{definition:ftl-m-transition-graph}, the transition graph $G_m$ has $\abs{V_m} = k_A^m$ vertices. All cycles produced by the algorithm are simple and thus contain at most $k_A^m$ edges. If the overall optimal walk in $G_m$ is not a simple cycle, it can be decomposed into simple cycles which must perform at least as well as the optimal simple cycle.

    Consider the optimal sequence of actions for time horizon $T$, which corresponds to a maximum weight walk in $G_m$ after fixing the initial $m$ actions. Next, consider any state $s \in V_m$ (a sequence of $m$ actions) that has been visited at least three times. Substituting the path segments between consecutive visits to $s$ with the maximum-mean-weight simple cycle containing $s$ yields a better or equal per-action utility compared to the original sequence. Because this substitution may decrease the total sequence length, we can repeat the optimal cycle containing $s$ enough times to cover a sequence of length in $[T - k_A^m, T]$ (since any simple cycle has length at most $k_A^m$). Thus, we might lose the expected payoff from at most $k_A^m$ actions at the end of the sequence.

    By the Pigeonhole Principle, any state visited at least three times must first achieve this within the first $2k_A^m + 1$ edges of the walk (visiting all other $k_A^m - 1$ states twice and $s$ thrice). Accounting for the $m$ initial actions required to define the first vertex in $G_m$, the unoptimized prefix of the sequence costs the per-action expected payoff of at most $2k_A^m + m$ actions. Combining these two boundary losses, the expected cumulative payoff is bounded below by $\OPT - (3k_A^m + m) H$, where $\OPT$ is the optimal expected cumulative payoff for a sequence of length $T$.
\end{proof}

\subsection{Karp's Algorithm for Maximum Mean Weight Cycle}\label{section:karp-max-mean-weight-cycle}

\begin{definition}[Maximum Mean Weight Cycle]
    Given a strongly connected directed graph $G = (V, E)$ and a weight function $w: E \to \mathbb{R}$, a simple cycle is a sequence of vertices $(v_1, v_2, \dots, v_l, v_1)$ where all $v_i$ are distinct. The \textbf{mean weight} of a cycle is:
    \[ 
        \mu(C) = \frac{1}{l} \sum_{i = 1}^{l} w(v_i, v_{i + 1})
    \]
    where $v_{l + 1} = v_1$. The \textbf{maximum/minimum mean weight cycle} is the simple cycle $C^*$ in $G$ for which $\mu(C^*)$ is maximized/minimized.
\end{definition}

\begin{theorem}[Karp's Theorem, Maximization Form~\cite{karp1978characterization}]
    Let $G = (V, E)$ be a directed graph. For any arbitrary source vertex $s \in V$, let $D_l(v)$ be the maximum weight of a path of exactly $l$ edges from $s$ to $v$. If the graph contains a cycle with positive mean weight, then the maximum mean weight $\lambda$ is given by:
    \[
        \lambda = \max_{v \in V} \left( \min_{0 \le l < \abs{V}} \frac{D_{\abs{V}}(v) - D_l(v)}{\abs{V} - l} \right).
    \]
\end{theorem}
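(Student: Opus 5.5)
The plan is the classical two-step argument: first normalize so that the optimal mean is $0$, then prove both inequalities for the normalized graph. Throughout, let $n=\abs{V}$ and set $D_l(v)=-\infty$ when no $l$-edge walk from $s$ to $v$ exists. I assume every vertex is reachable from $s$, which holds when $G$ is strongly connected, as in the definition above. As in Karp's original proof, the argument never uses the positivity hypothesis, which serves only as a convention.

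\emph{Step 1 (normalization).} Replace every weight $w(e)$ by $w(e)-c$ for a constant $c$. Every cycle mean drops by exactly $c$, so $\lambda$ becomes $\lambda-c$. Every $D_l(v)$ becomes $D_l(v)-lc$, so each ratio $\frac{D_n(v)-D_l(v)}{n-l}$ also drops by exactly $c$, and the right-hand side becomes its old value minus $c$. Taking $c=\lambda$, it suffices to prove that the right-hand side equals $0$ whenever the maximum cycle mean is $0$, i.e., every cycle has weight at most $0$ and some cycle $C$ has weight exactly $0$.

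\emph{Step 2 (upper bound $\le 0$).} Let $\pi(v)$ be the maximum weight of a walk from $s$ to $v$. Since no cycle has positive weight, removing cycles never decreases weight, so a maximum-weight walk can be taken to be a simple path. Hence $\pi(v)=\max_{0\le l<n} D_l(v)$, and this value is finite by reachability. Any $n$-edge walk to $v$ repeats a vertex and so contains a cycle. Deleting that cycle leaves a shorter walk of at least the same weight, which gives $D_n(v)\le \max_{l<n}D_l(v)$. Taking $l$ attaining this maximum, $\frac{D_n(v)-D_l(v)}{n-l}\le 0$, so the inner minimum is at most $0$ for every $v$. This also covers $D_n(v)=-\infty$.

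\emph{Step 3 (some $v$ attains $0$).} This is the step I expect to be the main obstacle. Pick $u$ on the zero-weight cycle $C$, and let $P$ be a maximum-weight simple path from $s$ to $u$, of length $j<n$ and weight $\pi(u)$. Extend $P$ by walking $n-j$ further edges along $C$, wrapping around $C$ as many times as needed, and let $v$ be the endpoint. Then $D_n(v)\ge \pi(u)+w(u\to v)$, where $u\to v$ denotes this walk along $C$. Continuing along $C$ from $v$ back to $u$ (the walk $v\to u$) completes a whole number of traversals of $C$, so
\begin{equation*}
w(u\to v)+w(v\to u)=0 .
\end{equation*}
By maximality of $\pi$ we have $\pi(u)\ge \pi(v)+w(v\to u)$, and therefore $\pi(v)\le \pi(u)-w(v\to u)=\pi(u)+w(u\to v)\le D_n(v)$. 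Since $D_l(v)\le\pi(v)$ for every $l<n$, every ratio $\frac{D_n(v)-D_l(v)}{n-l}$ is nonnegative. For this $v$ the inner minimum is therefore $\ge 0$, and combined with Step 2 it equals $0$.

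Combining Steps 2 and 3, the right-hand side equals $0=\lambda$ after normalization, and Step 1 transfers this back to the original weights. The care needed in Step 3 is to treat the wrapped walk along $C$ correctly, since the extension by $n-j$ edges may go around $C$ several times, and to keep the $-\infty$ conventions consistent throughout.
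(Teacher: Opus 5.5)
Your proof is correct and follows Karp's original argument, adapted to the maximization form: shift weights so that $\lambda=0$, bound $D_{|V|}(v)\le\max_{l<|V|}D_l(v)$ by deleting a nonpositive cycle, then extend a maximum-weight path to a vertex of the zero-weight cycle, walking along that cycle until the length is exactly $|V|$. The paper itself gives no proof of this theorem and simply cites Karp, while your two repairs to the statement as written are genuinely needed and correctly handled. First, $D_l$ must range over walks rather than simple paths. Second, every vertex must be reachable from $s$, for instance by strong connectivity: with two disjoint self-loops of weights $1$ and $5$ and $s$ on the first, the right-hand side cannot equal $\lambda=5$.
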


\begin{proposition}
    Karp's algorithm~\cite{karp1978characterization} computes the maximum mean weight of a cycle in a graph $G(V, E)$ in $\bigo{\abs{V} \cdot \abs{E}}$ time. The algorithm can be further modified to also compute a cycle that achieves the maximal mean weight~\cite{karp1978characterization,chaturvedi2017note}.
\end{proposition}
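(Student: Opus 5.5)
The plan is to reduce to Karp's original minimum-mean-cycle theorem by negating weights, and then account for the running time and for extracting an actual cycle.

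Given $G=(V,E)$ with weights $w$, define $w'=-w$. A walk's maximum weight under $w$ equals minus its minimum weight under $w'$. So $D_l(v)=-D'_l(v)$, where $D'_l$ is the minimum weight of an $l$-edge walk under $w'$. Karp's minimization formula, $\lambda' = \min_v \max_{l} \frac{D'_{|V|}(v)-D'_l(v)}{|V|-l}$, then becomes the maximization formula stated above with $\lambda=-\lambda'$. Since mean weights also negate, the maximum mean cycle under $w$ is exactly the minimum mean cycle under $w'$.

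If $G$ is not strongly connected, I would either add a super-source $s$ with zero-weight edges to all vertices, which creates no new cycles, or run the method per strongly connected component, following Karp.

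For the running time, computing all $D_l(v)$ for $l=0,\dots,|V|$ uses the recurrence
\[
D_l(v)=\max_{(u,v)\in E}\bigl(D_{l-1}(u)+w(u,v)\bigr).
\]
Each layer costs $O(|E|)$ and there are $|V|+1$ layers, giving $O(|V|\cdot|E|)$ in total. The final max-min over $v$ and $l$ costs $O(|V|^2)$, which is dominated by $O(|V||E|)$ once we restrict to vertices reachable from $s$; we may assume $|E|\ge|V|-1$ on the relevant part, and in the transition graphs $G_1$ and $G_m$ every vertex has out-degree $k_A\ge 1$ anyway.

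To recover a cycle, store a predecessor pointer with each $D_l(v)$. Let $v^*$ attain the outer maximum. Then the $|V|$-edge maximum walk to $v^*$ contains a cycle, since it has $|V|+1$ vertices. Karp's argument shows that some cycle on this walk has mean exactly $\lambda$. Chaturvedi and McConnell correct the original claim about which cycle this is: one must take a cycle on the critical walk, and they show it can be found in $O(|V|)$ extra time by tracing back the walk and checking cycle means.

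The main obstacle is this extraction step. Karp's original paper only establishes the value of $\lambda$, and the naive choice of "the cycle on the walk to $v^*$" can fail. I would follow the cited note, reweighting by $w-\lambda$ so that every cycle has nonpositive weight. Then any cycle on the maximal walk to $v^*$ must have weight $0$; otherwise removing it would yield a heavier walk of fewer edges, contradicting the minimax characterization. Such a cycle therefore has mean exactly $\lambda$.
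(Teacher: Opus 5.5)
Your proposal is correct and takes essentially the same approach as the paper, which gives no argument of its own: it defers to Karp and Chaturvedi--McConnell and remarks that the minimization algorithm adapts to maximization, and your negation reduction, layered $O(\abs{V}\cdot\abs{E})$ recurrence, and reweighting argument for cycle extraction spell out exactly those cited steps. One inconsistency should be fixed: your own reweighting argument shows that \emph{every} cycle on a maximum-weight $\abs{V}$-edge walk to $v^*$ has mean $\lambda$, so drop the remark that this choice ``can fail'' (what can fail, when there are ties, is assuming $v^*$ itself lies on an optimal cycle).
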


The algorithm is originally presented for finding the value of the minimum mean weight cycle, however, it can be modified to also construct the maximum mean weight cycle~\cite{karp1978characterization,chaturvedi2017note}. Later \cite{dasdan2002faster} developed a faster algorithm specifically for the problem of finding the maximum mean weight cycle.

\section{Proof from \Cref{sec:unknown-opponent}} \label{appendix:missing-proof-unknown-opponent}
\begin{proof}[Proof of \Cref{thm:unknown-opponent}]
 \textit{Reduction to stochastic bandits:} For each learning rule $f \in \mathcal{F}$, we compute the sequence of length $2m$ which has the highest expected payoff against $f$ over the last $m$ steps. Call the sequence for learning rule $f$ $S_{f2m}$, and denote the set of all such sequences $\mathbb{S}$. Divide the time horizon $T$ into blocks of $2m$ steps each, and calling these blocks $t'_1, t'_2, \dots t_{T'}$, where $T' = \lfloor \frac{T}{2m}\rfloor$; call any remaining time steps at the end block $R$. We will treat each block $t'_i$ as a single time step in the stochastic bandits context, in which pulling arm $i$ will correspond to playing $S_{f_i2m}$, and the realized reward of pulling arm $i$ is the realized payoff during the last $m$ steps of $S_{f_i2m}$. 

 Using MOSS as $\mathsf{Alg_{SB}}$, we can apply the $O(\sqrt{KT})$ regret bound from \cite{stochastic_bandits}. In our case $K=|\mathcal{F}|$ and $T=T'$; the bound relies on payoffs in $[0,1]$, so we apply it to the payoff over the last $m$ rounds divided by $m$, and rescale the resulting regret by $m$. Substituting $|\mathcal{F}|$ and $T' = \lfloor \frac{T}{2m}\rfloor$, then, we get a regret bound of $O(\sqrt{|\mathcal{F}|mT})$ compared to the payoff we would get playing the best arm over the entire sequence of $T'$ time steps.

It remains to show that playing the ``best arm'' (i.e., the sequence $S_{f2m}$ with the highest expected payoff over the last $m$ steps---call this the ``best sequence'') over all $T'$ time steps, followed by arbitrary actions over the remaining $T-2mT'$ time steps, obtains at least $\frac{1}2{\OPT} - \frac{2m -1}{2}$ in expectation over the original $T$ time steps. 

Let $\OPT_{2mT'}$ be the expected payoff of the optimal sequence over the first $T'*2m$ time steps, and $\OPT_{R}$ be the expected payoff of the optimal sequence over the remaining $T-T'*2m$ time steps. Then:

\begin{align*}
 \OPT_{2mT'} + \OPT_{R} &\geq \OPT  \\
\frac{1}{2}\OPT_{2mT'} + \frac{1}{2}\OPT_{R}  &\geq \frac{1}{2}\OPT  \\ 
\frac{1}{2}\OPT_{2mT'} &\geq \frac{1}{2}\OPT - \frac{1}{2}\OPT_{R} \\ 
\end{align*}

There are at most $2m-1$ remaining time steps and all payoffs are at most 1, so $\OPT_R \leq 2m-1$. So we have: 

\begin{align*}
  \frac{1}{2}\OPT_{2mT'} &\geq \frac{1}{2}\OPT - \frac{2m -1}{2} \\
\end{align*}

It remains to show that repeating the best sequence of length $2m$ over the first $2mT'$ time steps, then playing arbitrary actions for the remainder of the game, achieves at least $\frac{1}{2}\OPT_{2mT'}$ in expectation. Since all payoffs are non-negative, it suffices to show $\frac{1}{2}\OPT_{2mT'}$ is gained in expectation during the first $2mT'$ time steps. 

During the first $2mT'$ time steps, we repeat the sequence $S_{f2m}$ which has the highest expected payoff over the last $m$ steps among all sequences in $\mathbb{S}$. Recall that $\mathbb{S}$ included $S_{f^*2m}$, where $f^*$ is the true opponent, so the expected payoff of $S_{f2m}$ over the last $m$ steps is at least that of $S_{f^*2m}$ over the last $m$ steps. 

Now, we claim that the highest expected score possible over any block of $m$ time steps is achieved in the second $m$ steps of $S_f2m$. This is immediate for any block of $m$ time steps beginning after time step $t > m$ (in which the entire memory is filled) as $S_{f2m}$ chooses the full memory in the first $m$ steps such that the expected payoff in the second $m$ steps is maximized. For any block of $m$ steps beginning at time step $t \leq m$ (for which there are fewer than $m$ actions in the history) it also holds as this is equivalent to the special case with history length $m$ where the first $m - t + 1$ actions are blank actions. 

Dividing the first $2mT'$ time steps into blocks of length $m$, the optimal strategy must in expectation achieve at least half of its expected utility in either the odd or even blocks. By playing $S_{f2m}$ repeatedly, we maximize the score in the even blocks. Since there are the same number of odd and even blocks in $2mT'$, by what we just showed above, this must be at least as good as maximizing the expected payoff in the odd blocks. Therefore, we achieve at least $\frac{1}{2}\OPT_{2mT'}$ in expectation.
\end{proof}

\section{Proofs from Section \ref{section:hardness-results}}
\label{appendix:missing-proofs-hardness}

\begin{proof}[Proof of Theorem \ref{thm:np_hardnes} (Part 1)] \textsc{It is NP-hard to compute a sequence of actions which attains payoff within any constant multiplicative factor of the optimal payoff against \FTL, even in zero-sum games.}
    
Suppose we have a polynomial-time algorithm $\mathbb{A}$ for finding the optimal sequence against $\FTL$. We will show that we could use it to solve 3-SAT in polynomial time using $\mathbb{A}$, which is not possible unless $P=NP$. 

\begin{figure}[htbp]
\centering
\caption{Utility function $u^B(a,b)$}
\[
u^B(a,b)=
\begin{cases}
0 & b \text { is } X_i \text{ and } a \in \{X_{iT}, X_{iF}\}\\
-n + 1 & b \text { is } X_i \text{ and } a \in \{X_{jT}, X_{jF}\} \text{, } j \neq i\\
-2n + 2 & b \text { is } C_j\text{, $a$ is } X_{iT} \text { and } x_i \in c_{j}\\
-2n +2 & b \text { is } C_j\text{, $a$ is } X_{iF} \text { and } \neg x_i \in c_{j}\\
-n + 3 & b \text { is } C_j\text{, $a$ is } X_{iT} \text { and } x_i \notin c_{j}\\
-n + 3 & b \text { is } C_j\text{, $a$ is } X_{iF} \text { and } \neg x_i \notin c_{j}\\ 
0 & b \text{ is not $\Done$ and } a \text{ is $\Done$}\\
-n + 2 & b \text{ is $\Done$ and } a \text{ is not $\Done$} \\
-L & b \text{ is $\Done$ and } a \text{ is $\Done$} \\
\end{cases}
\]
\label{eq:utility_function}
\end{figure}

Given an instance of 3-SAT with $m$ clauses $c_1, ..., c_m$ and $n$ variables $x_1, ..., x_n$ (we assume $n \geq 3$ for simplicity), we set up a game $G$ as follows. Let player A have actions $X_{iT}$ and $X_{iF}$ for each variable $x_i$ in the formula, plus one additional action, $\Done$, such that $\mathcal{A} = \{X_{1T}, X_{1F}, ..., X_{nT}, X_{nF}, \Done \}$. Let player B have an action $X_{i}$ for each variable $x_i$ in the formula, and action $C_j$ for each clause $c_j$ in the formula, plus one additional action, $\Done$, such that $\mathcal{B} = \{X_{1}, ..., X_{n}, C_1, ..., C_m, \Done \}$. Let $G$ be zero-sum and define Player B's payoffs according to~\Cref{eq:utility_function}; let $L$ be arbitrarily large and as defined within the proof. This will produce a game matrix like the example shown in~\Cref{fig:hardness_example_payoffs}.

Let $T=n+1$, and assume that the opponent breaks ties among actions deterministically according to the order $X_1, ... X_n, C_1, ..., C_m, \Done$ (from left to right as shown in the example in \Cref{fig:hardness_example_payoffs}). 

\begin{figure}[htbp]
\centering
\caption{Opponent's payoffs given example clauses $c_1=v_1 \vee \neg v_2 \vee v_n$, $c_m= \neg v_1 \vee v_2 \vee v_n$}
\renewcommand{\arraystretch}{1.2}
\begin{tabular}{l|*{3}{c}|*{3}{c}|c}
       & \multicolumn{3}{c|}{\textbf{Assignments}} & \multicolumn{3}{c|}{\textbf{Clauses}} & \textbf{} \\
       & $X_1$ & ... & $X_n$ & $C_1$ & ... & $C_m$ & $\Done$ \\
\specialrule{1pt}{1pt}{1pt} % Thicker rule
$X_{1T}$ &    $0$   &       &     $-n + 1$   &  $-2n+2$    &       &    $-n+3$   &    $-n+2$   \\
$X_{1F}$ &   $0$    &       &     $-n + 1$  &   $-n+3$    &       &   $-2n+2$    &   $-n+2$    \\
$X_{2T}$ &     $-n + 1$  &       &     $-n + 1$   &     $-n+3$ &       &   $-2n+2$    &    $-n+2$  \\
$X_{2F}$ &    $-n + 1$   &      &       $-n + 1$ &     $-2n+2$  &       &    $-n+3$   &   $-n+2$    \\
$\vdots$ & $\vdots$ & $\vdots$ & $\vdots$ & $\vdots$ & $\vdots$ & $\vdots$ & $\vdots$ \\
$X_{nT}$ &    $-n + 1$    &       &   $0$    &   $-2n+2$    &       &   $-2n+2$   &    $-n+2$   \\
$X_{nF}$ &   $-n + 1$     &       &    $0$   &    $-n+3$   &       &   $-n+3$    &     $-n+2$  \\
$\Done$ &    0   &   ...    &    0   &    0   &   ...    &   0    &  $-L$  \\
\end{tabular}
\label{fig:hardness_example_payoffs}
\end{figure}

We'll show that when $L$ is set large enough, the optimizer can achieve an $\alpha$-approximately optimal payoff in this game if and only if they play actions which correspond to a satisfying assignment, followed by $\Done$ (for constant $\alpha$ > 0). When $L$ is chosen large enough, if it is possible to get the payoff $L$ corresponding to $(\Done, \Done)$, the optimizer must do so to achieve $\alpha$-approximation of the the optimal payoff. The maximum payoff the optimizer could achieve without $L$ over all $T=n+1$ rounds is at most $(n+1)(2n-2) = 2n^2 - 2$, so since all payoffs for the optimizer are nonnegative, $L>\frac{2n^2 -2}{\alpha}$ suffices. Since $\mathbb{A}$ finds a sequence which maximizes the optimizer's payoff, it follows that there exists a satisfying assignment if and only if the sequence $S = \mathbb{A}(G)$ attains payoff $\geq L$; it's easy to check this in $O(Tk_B)$ time. Moreover, if there is a satisfying assignment, it corresponds to the first $n$ actions of $S$ ($X_{iT}$ corresponds to setting $x_i=\True$, and $X_{iF}$ corresponds to setting $x_i=\False$). 

First, we'll show that if a satisfying assignment exists, the optimizer can get payoff $\geq L$ by playing $n$ actions corresponding to a satisfying assignment, followed by $\Done$, to collect the payoff of $L$ in round $n+1$. It suffices to show that the optimizer gets payoff $L$ in the last round, as all the optimizer's payoffs are nonnegative. If the optimizer plays a valid assignment which satisfies all clauses, each ``assignment'' column will have score $(n-1)(-n +1) + 1(0) = -n^2 + 2n -1$ for the opponent (this corresponds to each variable being assigned exactly once), and each ``clause'' column will have score $\leq (n-1)(-n+3) + 1(-2n+2) = -n^2 + 2n - 1$ (corresponding to each clause containing at least one literal which evaluates to true). The $\Done$ column will have the highest score, $n(-n + 2) = -n^2 + 2n$, so the opponent will play $\Done$. Therefore, the optimizer can also play $\Done$ to get payoff $L$ in that round.

Next, we show that if the optimizer does not play a satisfying assignment followed by $\Done$, they cannot get payoff $\geq L$. As explained above, for appropriate $L$, the optimizer can only achieve this maximum payoff when playing $\Done$, so suppose the optimizer plays $\Done$ when they have not played a satisfying assignment. We will show that in that case, the opponent would not play $\Done$, and therefore the optimizer cannot get payoff $L$. Once the optimizer has played $\Done$ once, the opponent will not play $\Done$ during any future round, as the large negative score $-L$ incurred for appropriately large $L$ $(> 2n^2 - 2$) is worse than the worst possible score of any other column (because $T=n+1$, the worst possible negative score of any other column is $-2n^2 + 2$). It follows that it is only possible to get the payoff $L$ the first time the optimizer plays $\Done$, so we can consider only that case below.

Suppose the optimizer plays $\Done$ for the first time in round 1. By our tie-breaking assumption, the opponent will not play $\Done$ in round 1, so the optimizer will not get payoff $\geq L$.

Suppose the optimizer plays $\Done$ for the first time in some other round $t$, where $1 < t \leq  n$. When the opponent is choosing their action for round $t$, the score for the $\Done$ column must be $(t-1)(-n + 2) = -tn + 2t + n - 2$. Meanwhile, since $\Done$ was not played before round $t$ and $t > 1$, at least one variable assignment action must have been played, so the score of some assignment column must be at least $(t-2)(-n+1) = -tn + t + 2n -2$. Since $t \leq n$, the score of the $\Done$ column cannot be strictly greater than the score of every assignment column, so by our tie-breaking assumption, the opponent will not play $\Done$ and the optimizer will not get payoff $\geq L$.

Finally, suppose the optimizer plays $\Done$ for the first time in round $T=n+1$, but their actions over the previous $n$ rounds do not correspond to a satisfying assignment. There are two ways this could happen: either the assignment is not valid or a clause is not satisfied. Consider the case where the assignment is not valid. Since the optimizer did not play $\Done$ in the first $n$ rounds, they must have played $n$ assignment actions, so some variable must be assigned more than once. For any variable which is assigned more than once, its corresponding column will have score $\geq (n-2)(-n+1) = -n^2 +3n -2$ for the opponent, which is greater than the score of the $\Done$ column at $n(-n + 2) = -n^2 +2n$ for $n \geq 3$, so the opponent will not play $\Done$ in round $n+1$ and the optimizer will not collect payoff $L$. If the assignment is valid but any clause is not satisfied, the score for the column corresponding to the unsatisfied clause is $n(-n+3) = -n^2 + 3n$ which is greater than the score of the $\Done$ column ($-n^2 + 2n$), so again the opponent will not play $\Done$ and the optimizer cannot collect payoff $L$. 
\end{proof}

\begin{proof}[Proof of Theorem \ref{thm:np_hardnes} (Part 2)] \textsc{It is NP-hard to compute a sequence of actions which attains payoff within any $T^\alpha$ additive error (for constant $0 < \alpha < 1$) of the optimal payoff against \FTL, even in zero-sum games.}

To show this, we can largely recycle the proof of \ref{thm:np_hardnes} (Part 1). Given a 3-SAT formula $\mathcal{F}$, we set up a game matrix $G'$ composed of an arbitrarily large number ($c$) of copies of the game $G$ (as defined in the proof above based on $\mathcal{F}$) along its diagonal, with all other payoffs set to 0. Let $T = c(n+1)$. We will refer to each copy of $G$ as a ``block''. Similarly to in the previous proof, $L$ must be set appropriately large; $L > T(2n -2) + T^{\alpha}$ will suffice here.\footnote{The reasoning for this is similar to in the previous proof: $T(2n-2)$ is the maximum cumulative payoff achievable without $L$, so the $T(2n -2)$ term alone in $L > T(2n -2) + T^{\alpha}$ makes obtaining $L$ necessary for obtaining an optimal payoff. The added $T^\alpha$ rules out an additive $T^\alpha$ approximation. $L > T(2n -2) + T^{\alpha}$ also ensures opponent will not play $\Done_b$ if the optimizer has played $\Done_b$ in the past (the maximum negative score of any column besides $\Done_b$ is $T(-2n + 2)$, and the $-L$ score for the $\Done_b$ column in this case will be strictly worse).}

The core idea is that maximizing the optimizer's expected cumulative payoff will require the optimizer to get as many of the $L$ payoffs corresponding to entries $(\Done_b, \Done_b)$ for each block $b$ as possible. In fact, we'll show it will require the optimizer to get all of them. 

It follows from the proof of Part 1 that the optimizer can get the payoff related to $(\Done_b, \Done_b)$ for block $b$ if and only if they play $n$ of that block's actions corresponding to a satisfying assignment of $\mathcal{F}$, followed by $\Done_b$. It also follows that each can only be attained once. To see that the reasoning above still holds here, we will note two things. First, since the blocks are along the diagonal and all other payoffs are 0, only the optimizer's actions corresponding to that block (i.e., the rows of that block) have any impact on the score of columns of that block. We can therefore apply the reasoning above to each block independently, considering only the rounds in which the optimizer played a row corresponding to the block in question. Second, since $T=c(n+1)$ rounds and there are $c$ blocks, the optimizer has exactly the number of rounds necessary to get the maximum payoff for each block: this implies that if there exists a satisfying assignment to $\mathcal{F}$, the optimizer must never play more than $n+1$ actions corresponding to a single block. 

Similarly to in the proof of Part 1, we can check if there is a satisfying assignment by seeing if the payoff $L$ for $(\Done_b, \Done_b)$ for any block $b$ is achieved (we can check this in $O(Tk_B)$ time); if so, we can recover a satisfying assignment to $\mathcal{F}$ from the other actions played corresponding to block $b$ as above ($X_{iTb}$ corresponds to $x_i = \True$, and $X_{iFb}$ corresponds to $x_i=\False$). To see where the $T^\alpha$ additive bound comes from, we can set $c=\ceil{T^\alpha}$.
\end{proof}

\end{document}